\numberwithin{equation}{section}
\newtheorem{theorem}{Theorem}[section]
\newtheorem{proposition}[theorem]{Proposition}
\newtheorem{corollary}[theorem]{Corollary}
\newtheorem{lemma}[theorem]{Lemma}
\theoremstyle{definition}
\newtheorem{definition}[theorem]{Definition}
\newtheorem{example}[theorem]{Example}
\newtheorem{remark}[theorem]{Remark}
\title{Some constructions of  non-generalized Reed-Solomon MDS Codes}
\author{Kanat Abdukhalikov \\	
Department of Mathematical Sciences, \\
UAE University, PO Box 15551, Al Ain, UAE\\
Email: abdukhalik@uaeu.ac.ae \bigskip  \\  
Cunsheng Ding \\
Department of Computer Science and Engineering, \\
The Hong Kong University of Science and Technology, Hong Kong, China\\
Email: cding@ust.hk \bigskip \\
Gyanendra K. Verma \\	
Department of Mathematical Sciences, \\
UAE University, PO Box 15551, Al Ain, UAE\\
Email:  gkvermaiitdmaths@gmail.com}
\date{}
\begin{document}
	\maketitle
\begin{abstract} 
We investigate two classes of extended codes and provide necessary and sufficient conditions for these codes to be non-GRS MDS codes. We also determine the parity check matrices for these codes. Using the connection of MDS codes with arcs in finite projective spaces, we give a new characterization of $o$-monomials.
\end{abstract}
\textbf{Keywords}: MDS code, Reed-Solomon code, generalized RS code, non-GRS MDS code, $n$-arcs, $o$-polynomials, hyperovals.\\
	\textbf{Mathematics subject classification}: 51E21, 05B25, 94B65.\\

\section{Introduction}
Let $\mathbb{F}_q$ be a finite field of order $q$, where $q$ is a prime power. A linear subspace of $\mathbb{F}_q^n$ is called a linear code of length $n$ over $\mathbb{F}_q$. 
A code $C$ is said to be an $[n,k,d]_q$ code if $C$ is linear code of length $n$ and dimension $k$ with the minimum distance $d$.
For an $[n,k,d]_q$ code, Singleton bound states that $d\leq n-k+1$. Codes that attain the Singleton bound are called Maximum Distance Separable (MDS) codes. These codes are optimal for error detection and correction, as they attain the largest possible minimum distance for a given length and dimension.
MDS codes also have applications in combinatorial designs and are closely related to finite geometry \cite{Macwilliams1977,Hirschfeld1998}. 
A $k$-dimensional linear MDS code over $\mathbb{F}_q$ is  equivalent to an arc in the projective space $PG(k-1,q)$. 

In general, constructing MDS codes with large lengths remains a challenging problem. This difficulty is closely tied to the well-known MDS conjecture, which imposes strict limitations on the maximum length of an MDS code relative to the size of the underlying finite field. 
The famous MDS conjecture states that for an $[n,k,d]$ MDS code where $d \ge 3$, the length $n$ satisfies $n \le q+1$, unless $k \in \{3, q-1\}$ and $q=2^h$ in which case $n \le q+2$. The MDS conjecture has been verified for $q$ prime \cite{Ball2012}. By this conjecture, if $4 \le k \le q-3$, then a $k$-dimensional linear MDS code of length $n$ should satisfy inequality $n \le q+1$. The MDS conjecture asserts that, within that range, one cannot do better than the generalized Reed–Solomon code. 
GRS codes are the only known class of MDS codes over $\mathbb{F}_q$, which exist for all lengths $n\le q+1$ and dimension $k \le n$. 
MDS conjecture is given by Segre \cite{Segre1955} in the context of finite projective geometry (for details, see \cite{Ball2019,Macwilliams1977}).  
Generalized Reed-Solomon (GRS) codes have found several practical applications due to their algebraic structures and properties. The dual of the GRS code is also a GRS code. The length of GRS codes is restricted by field size, that is, any GRS code over a finite field $\mathbb{F}_q$ can have a maximum length of $q+1$.  In \cite{Seroussi1986}, Seroussi and Roth showed that a GRS code of dimension  $k$ with $2\leq k\leq n-\lfloor \frac{q-1}{2}\rfloor$ ($k\neq 3$ if $q$ is even)  can be extended to an MDS code if and only if the resulting extended code is also a GRS. Moreover, GRS codes of length $q+1$ and $2\leq k\leq \lfloor \frac{q}{2}\rfloor+2$ ($k\neq 3$ if $q$ is even) do not admit further MDS extensions.

A non-GRS MDS code is an MDS code that is not equivalent to a GRS code. The construction of non-GRS MDS codes is inherently driven by both the fundamental goal of classifying MDS codes and their significant applications in cryptography (see \cite{beelen2018,Lavauzelle2019,Zemor2015}). In 1989, Roth and Lempel \cite{Roth1989}  constructed non-GRS MDS codes via generator matrices. In \cite{Beelen2017,Beelen2022}, Beelen et al. introduced a class of new codes called twisted Reed-Solomon (TRS) codes. These are not MDS codes in general. However, the class of TRS code exhibits several subclasses of non-GRS MDS codes. Jin et al. \cite{Jin2024} introduced a general method to construct MDS codes by putting some conditions on the set of evaluation polynomials and the set of evaluation points. Using the Schur square technique, they proved that these MDS codes are non-GRS MDS codes. In \cite{Liu2024}, Liu et al. provided some generic constructions of MDS codes via TRS and generalized TRS codes and obtained some families of non-GRS MDS codes. Recently, several families of non-GRS MDS codes have been constructed using various techniques in \cite{Chen2022,Li2024,Wu2024,Wu2021,Zhi2025}. 

In this work, we contribute further to this area by presenting constructions of non-GRS MDS codes, derived through the study of extended codes of codes introduced in \cite{Jin2024}. The contributions of this work are as follows. In Section \ref{knownGRS} 
we prove that the sufficient conditions obtained in \cite{Jin2024} for codes to be MDS are necessary as well. Furthermore, we determine the parity check matrices for non-GRS codes obtained in \cite{Jin2024}. 
In Sections \ref{firstextension} and \ref{secondextension} we construct two classes of non-GRS MDS codes by extending the code from \cite{Jin2024}. We also determine the parity check matrices for the newly constructed codes. In Section \ref{oplynomials} we explore the connection between 3-dimensional MDS codes and $o$-polynomials and provide a new characterization of $o$-monomials in terms of complete symmetric functions.

\section{Notation and Background}\label{notations}
In this section, we give some definitions, notations, and preliminaries that will be used throughout the paper. 

For a positive integer $k$, let
$$V_k=\{f(x)\in \mathbb{F}_q[x]\mid  \deg f(x)\leq k-1\}.$$
\begin{definition}
For $n\leq q+1$, let $\Lambda=\{\alpha_1,\alpha_2,\dots \alpha_n\}\subseteq \mathbb{F}_q\cup\{\infty\}$. The generalized Reed-Solomon (GRS) code is defined as 
$$C_{GRS}=\{(v_1f(\alpha_1),v_2f(\alpha_2),\dots,v_nf(\alpha_n)) \mid  f(x)\in V_k\}$$
for some $v=(v_1,v_2,\dots,v_n)\in \mathbb{F}_q^n$ with $v_i\neq 0$ for all $1\leq i\leq n$. 
Here the element $f(\infty)$ is defined as the coefficient of $x^{k-1}$ in the polynomial $f(x)$, $\deg f(x) \le k-1$. 
\end{definition}
In particular, when $v=(1,1,\dots,1)$, the code is called a Reed-Solomon (RS) code.

Note that several authors do not include the point $\infty$ in the definition of GRS code, they call it extended GRS code.  However, a more natural definition of  the GRS code, which is entirely equivalent to the classical one, is obtained by evaluating
 homogeneous polynomials $f(x,y) \in \mathbb{F}_q[x, y]$ of degree $k - 1$, at the points of the projective line, as follows (see \cite{Ball2023}). 
 Let $\alpha_n=\infty$. Then $\alpha_n$ corresponds to $(1:0)$ in the projective line, thus
 \begin{align*}
     \begin{split}
         C_{GRS}=&\{(v_1f(\alpha_1:1),v_2f(\alpha_2:1),\dots,v_nf(1:0) \mid\\
         &f(x,y)\in \mathbb{F}_q[x,y], f(x,y) \text{ is homogeneous}, \deg f(x,y)= k-1\},
     \end{split}
 \end{align*}
where if $f(x)=f_0+f_1x+\dots+f_{k-2}x^{k-2}+f_{k-1}x^{k-1}\in V_k$, then the corresponding homogeneous polynomial  $f(x,y)\in \mathbb{F}_q[x,y]$ is  
 $$f(x,y)=f_0y^{k-1}+f_1xy^{k-2}+\dots+f_{k-2}x^{k-2}y+f_{k-1}x^{k-1},$$
 and $f(\infty)=f(1:0)=f_{k-1}$. 

It is well established that GRS codes are MDS codes. If $\Lambda=\{\alpha_1,\alpha_2,\dots \alpha_n\}\subseteq \mathbb{F}_q\cup\{\infty\}$, then generator matrices for GRS codes are given by 
$$G_{GRS}=\begin{bmatrix}
    v_1&v_2& \dots &v_n\\
    v_1\alpha_1& v_2\alpha_2& \dots&v_n\alpha_n\\
    v_1\alpha_1^2& v_2\alpha_2^2& \dots&v_n\alpha_n^2\\
    \vdots &\vdots &\dots&\vdots\\
    
    v_1\alpha_1^{k-1}& v_2\alpha_2^{k-1}& \dots&v_n\alpha_n^{k-1}
\end{bmatrix}_{k\times n},$$
where in case $\alpha_i =\infty$ for some $1\leq i\leq n$, the $i$-th column of the matrix $G_{GRS}$ is the coloumn $(0,0,\dots,v_i)^T$. 

\begin{definition}\label{nongrs}
A non-GRS MDS code is an MDS code that is not equivalent to a GRS code. 
\end{definition}
The following lemma is a well known characterization of MDS codes in terms of generator matrices. 
\begin{lemma}\cite[p.319]{Macwilliams1977}\label{mdsnonsingular}
 Let $C$ be an $[n,k]_q$ code with a generator matrix $G$. Then $C$ is MDS if and only if all $k\times k$ sub-matrices of $G$ are non-singular.   
\end{lemma}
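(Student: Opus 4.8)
The plan is to relate the combinatorial condition on the $k\times k$ submatrices of $G$ directly to the weights of codewords, rather than going through the parity-check matrix (though the dual route, using that the dual of an MDS code is again MDS, would work equally well). Write $g_1,\dots,g_n\in\mathbb{F}_q^k$ for the columns of $G$; then every codeword has the form $c=xG=(x\cdot g_1,\dots,x\cdot g_n)$ for a unique $x\in\mathbb{F}_q^k$, and the $i$-th coordinate of $c$ is zero exactly when $x\cdot g_i=0$. Since $\dim C=k$ we have $\operatorname{rank}G=k$, so the map $x\mapsto xG$ is injective; in particular any $x\neq 0$ produces a nonzero codeword. The single observation I would isolate first is: for $S\subseteq\{1,\dots,n\}$ with $|S|=k$, the submatrix $G_S=[\,g_i\,]_{i\in S}$ is singular if and only if there is a nonzero $x$ with $x\cdot g_i=0$ for all $i\in S$, i.e.\ if and only if some nonzero codeword vanishes on all $k$ coordinates indexed by $S$.

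From here both directions follow by contraposition. If some $G_S$ with $|S|=k$ is singular, the observation gives a nonzero codeword $c$ with at least $k$ zero coordinates, hence $\operatorname{wt}(c)\le n-k$ and $d\le n-k<n-k+1$, so $C$ is not MDS. Conversely, if $C$ is not MDS then $d\le n-k$, so there is a nonzero codeword $c=xG$ with at least $k$ zero coordinates; choosing any $k$ of those coordinates as $S$, the observation forces $G_S$ to be singular.

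To close the equivalence I would record the complementary (Singleton) bound for completeness: for any $T\subseteq\{1,\dots,n\}$ with $|T|=k-1$, the linear map $\mathbb{F}_q^k\to\mathbb{F}_q^{k-1}$ sending $x$ to the coordinates of $xG$ indexed by $T$ has nontrivial kernel, so there is always a nonzero codeword with at least $k-1$ zeros, giving $d\le n-k+1$ unconditionally. Hence ``every $k\times k$ submatrix of $G$ is nonsingular'' is equivalent to $d\ge n-k+1$, and combined with $d\le n-k+1$ this says exactly $d=n-k+1$, i.e.\ $C$ is MDS. The only place demanding a moment's care is the appeal to $\operatorname{rank}G=k$, which is what guarantees that the vectors $x$ extracted above yield genuinely nonzero codewords; the remainder is the routine translation between ``$xG$ has a zero in position $i$'' and ``the column $g_i$ is annihilated by $x$''.
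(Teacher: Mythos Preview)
Your argument is correct: the key observation that a $k\times k$ submatrix $G_S$ is singular exactly when some nonzero $x\in\mathbb{F}_q^k$ annihilates all columns indexed by $S$, hence exactly when some nonzero codeword has at least $k$ zeros, is sound, and both directions follow cleanly by contraposition. The appended Singleton-bound remark is also fine and rounds the argument out nicely.

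There is no proof in the paper to compare against: the lemma is quoted from MacWilliams--Sloane \cite[p.~319]{Macwilliams1977} and used as a black box throughout. Your direct approach via the generator matrix and codeword supports is precisely the standard textbook proof (and is essentially what one finds in that reference), so nothing further needs to be said.
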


Let $\Lambda=\{\alpha_1,\alpha_2,\dots,\alpha_n\}\subseteq \mathbb{F}_q$. Let $M_{\Lambda}$ be the square Vandermonde matrix of order $n$ defined by 
$$M_{\Lambda}=\begin{bmatrix}
    1&1&\dots&1\\
    \alpha_1&\alpha_2&\dots&\alpha_n\\
    \vdots&\vdots&\vdots&\vdots\\
    \alpha_1^{n-2}&\alpha_2^{n-2}&\dots&\alpha_n^{n-2}\\
    \alpha_1^{n-1}&\alpha_2^{n-1}&\dots&\alpha_n^{n-1}
\end{bmatrix}.$$
It is well known that the determinant of $M_\Lambda$ is  $\det(M_{\Lambda})=\prod_{1\leq i< j\leq n}(\alpha_j-\alpha_i)$.
Now, we fix some notation that will be used throughout the paper. For an integer $t\geq 0$, the $t^{\rm{th}}$ elementary symmetric function in $n$ variables is defined as
$$\sigma_t(x_1,x_2,\dots x_n)=\sum_{1\leq j_1<j_2<\dots<j_t\leq n}x_{j_1}x_{j_2}\cdots x_{j_t}$$
with $\sigma_0(x_1,x_2,\dots,x_n)=1$ and $\sigma_t(x_1,x_2,\dots x_n)=0$ for all $t>n$.

Let $S_t(x_1,x_2,\dots,x_n)$ be the complete symmetric polynomial of degree $t$ defined by 
$$S_t(x_1,x_2,\dots,x_n)=\sum_{t_1+t_2+\dots+t_n=t,\ t_i\geq 0}x_1^{t_1}x_2^{t_2}\dots x_n^{t_n}.$$ 
Let $\Lambda=\{\alpha_1,\alpha_2\dots,\alpha_n\}\subseteq \mathbb{F}_q$. By Cramer's rule, the solution of the system of linear equations 
$$\begin{bmatrix}
    1&1& \dots &1\\
    \alpha_1& \alpha_2& \dots&\alpha_n\\
    \alpha_1^2& \alpha_2^2& \dots&\alpha_n^2\\
    \vdots &\vdots &\dots&\vdots\\
    
    \alpha_1^{n-2}& \alpha_2^{n-2}& \dots&\alpha_n^{n-2}\\
    \alpha_1^{n-1}& \alpha_2^{n-1}& \dots&\alpha_n^{n-1}\\
\end{bmatrix}\begin{bmatrix}
    x_1\\
    x_2\\
    x_3\\
    \vdots\\
    x_{n-1}\\
      x_{n}\\
\end{bmatrix}=\begin{bmatrix}
    0\\
    0\\
    0\\
    \vdots\\
    0 \\
    1
\end{bmatrix}$$
is given by 
$$u_i=\prod_{1\leq j\leq n, j\neq i}(\alpha_i-\alpha_j)^{-1}, \ 1\leq i\leq n.$$ 
 
 We denote $S_t(\alpha_1,\alpha_2,\dots,\alpha_n)$ by $S_t$ with $S_0=1$ and $\sigma_t(\alpha_1,\alpha_2,\dots,\alpha_n)$ by $\sigma_t$. For uniformity, we write $S_t=0$ for $t<0$. There is a fundamental relation between the elementary symmetric functions and the complete symmetric functions (for instance, see \cite[Page 21, Eq. $2.6'$]{Macdonald1995}
\begin{equation}\label{eqA}
    \sum_{t=0}^N(-1)^{t}\sigma_tS_{N-t}=0, \ \ \  \text{for all } N\geq 1.
\end{equation}
Also, for $1\leq i\leq n$ and $r\geq 1$, define 

\begin{equation}\label{lambdari}
    \Lambda_{r,i}=\sum_{j=0}^r(-1)^j\sigma_j\alpha_i^{(n-k)+(r-1)-j}.
    \end{equation}

The following proposition determines the determinant of a matrix obtained by removing an arbitrary row from a Vandermonde matrix and adding another row with an exponent equal to the order of the matrix. We frequently utilize this result to study MDS codes presented in this paper.

\begin{proposition}\cite[Sec. 337]{Muir1960}\label{detk-r}
For $1\leq r\leq n-1$, let 
\begin{equation*}
   M_n^{(r)}=\begin{bmatrix}
    1&1& \dots &1\\
    \alpha_1& \alpha_2& \dots&\alpha_n\\
    \alpha_1^2& \alpha_2^2& \dots&\alpha_n^2\\
    \vdots &\vdots &\dots&\vdots\\
    \alpha_1^{n-r-1}& \alpha_2^{n-r-1}& \dots&\alpha_n^{k-r-1}\\
    \alpha_1^{n-r+1}& \alpha_2^{n-r+1}& \dots&\alpha_n^{n-r+1}\\
    \vdots &\vdots &\dots&\vdots\\
    \alpha_1^{n-1}& \alpha_2^{n-1}& \dots&\alpha_n^{n-1}\\
    \alpha_1^{n}& \alpha_2^{n}& \dots&\alpha_n^{n}\\
\end{bmatrix}_{n\times n}.
\end{equation*}
Then \begin{equation*}
    \det(M_n^{(r)})= \det(M_{\Lambda})\cdot \sigma_r(\alpha_1,\alpha_2,\dots,\alpha_n).
    \end{equation*}
\end{proposition}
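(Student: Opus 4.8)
The plan is to expand the determinant of $M_n^{(r)}$ along a path that reduces it to the Vandermonde determinant $\det(M_\Lambda)$ times a symmetric function, and then to identify that symmetric function as $\sigma_r$. First I would observe that $M_n^{(r)}$ is obtained from the $(n+1)\times(n+1)$ Vandermonde-type matrix on the nodes $\alpha_1,\dots,\alpha_n$ (with row exponents $0,1,\dots,n$) by deleting the row of exponent $n-r$ and also the last column; equivalently, $\det(M_n^{(r)})$ is, up to sign, a cofactor of a genuine Vandermonde matrix in the variables $\alpha_1,\dots,\alpha_n,t$ where $t$ is a formal extra node. Concretely, consider
\[
  W(t)=\det\begin{bmatrix}
    1 & \cdots & 1 & 1\\
    \alpha_1 & \cdots & \alpha_n & t\\
    \vdots & & \vdots & \vdots\\
    \alpha_1^{n} & \cdots & \alpha_n^{n} & t^{n}
  \end{bmatrix}
  =\Bigl(\prod_{i=1}^n (t-\alpha_i)\Bigr)\,\det(M_\Lambda),
\]
which is the standard Vandermonde evaluation. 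Expanding $W(t)$ along the last column gives $W(t)=\sum_{j=0}^{n}(-1)^{n+j}\,t^{j}\,\det(M_n^{(n-j)})$, where $\det(M_n^{(n-j)})$ is exactly the minor obtained by deleting the row of exponent $j$ and the last column of the big matrix — i.e. our $M_n^{(r)}$ with $r=n-j$ (with the convention that $j=n$, $r=0$ gives $\det(M_\Lambda)$ itself). Matching the coefficient of $t^{j}$ on both sides: on the right of the Vandermonde formula, $\prod_{i=1}^n(t-\alpha_i)=\sum_{j=0}^n (-1)^{n-j}\sigma_{n-j}\,t^{j}$, so the coefficient of $t^j$ is $(-1)^{n-j}\sigma_{n-j}\det(M_\Lambda)$. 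Comparing with $(-1)^{n+j}\det(M_n^{(n-j)})$ and noting $(-1)^{n+j}=(-1)^{n-j}$, we get $\det(M_n^{(n-j)})=\sigma_{n-j}\det(M_\Lambda)$, which is the claim with $r=n-j$.

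Alternatively, if one prefers a direct combinatorial argument: factor $\det(M_n^{(r)})$ by the same antisymmetry reasoning that proves the Vandermonde formula — it vanishes whenever $\alpha_i=\alpha_j$, hence is divisible by $\prod_{i<j}(\alpha_j-\alpha_i)=\det(M_\Lambda)$ — and then compare degrees: the total degree of $\det(M_n^{(r)})$ is $\binom{n}{2}+r$ (replacing the missing exponent $n-r$ by $n$ raises the degree by $r$ relative to $M_\Lambda$), so the quotient is a symmetric polynomial of degree $r$. It remains to pin down which one; evaluating at a convenient specialization (say $\alpha_n=0$, reducing to the $(n-1)$-variable case, or reading off the monomial $\alpha_1^0\alpha_2^1\cdots$) forces the quotient to be $\sigma_r$.

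I expect the main obstacle to be bookkeeping rather than conceptual: getting the row/exponent indexing and the signs exactly right in the cofactor expansion — in particular confirming that deleting the exponent-$(n-r)$ row really does correspond to the $t^{n-r}$ coefficient and that the sign $(-1)^{n+j}$ cancels against the sign in the expansion of $\prod(t-\alpha_i)$. Everything else (the Vandermonde evaluation, the divisibility/degree argument, and the symmetric-function identities) is standard and can be quoted or checked quickly. I would therefore write the cofactor-expansion proof as the primary one, since it delivers both the Vandermonde factor and the identification of $\sigma_r$ simultaneously, and mention the divisibility-plus-degree argument only as a remark.
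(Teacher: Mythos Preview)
The paper does not supply its own proof of this proposition: it is stated with a citation to \cite[Sec.~337]{Muir1960} and used as a black box thereafter. So there is nothing in the paper to compare your argument against.

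That said, your primary argument is correct and is in fact the classical one. Embedding the $\alpha_i$ together with an indeterminate $t$ into an $(n+1)\times(n+1)$ Vandermonde, expanding $W(t)=\det(M_\Lambda)\prod_i(t-\alpha_i)$ along the last column, and matching the coefficient of $t^{n-r}$ yields exactly $\det(M_n^{(r)})=\sigma_r\det(M_\Lambda)$; your sign bookkeeping (that $(-1)^{n+j}=(-1)^{n-j}$) is right, and the identification ``deleting the exponent-$j$ row gives $M_n^{(n-j)}$'' is correct since $M_n^{(r)}$ omits the exponent $n-r$. The alternative divisibility-plus-degree sketch is also sound but, as you note, needs an extra step to pin down the quotient as $\sigma_r$ rather than some other degree-$r$ symmetric polynomial; the cofactor route is cleaner and self-contained, so your choice of which to make primary is the right one.
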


Let $n$ and $h$ be positive integers. Define a generalized Vandermonde matrix $G_h$ as 
$$G_h=\begin{bmatrix}
    1&1&\dots&1\\
    \alpha_1&\alpha_2&\dots&\alpha_n\\
    \vdots&\vdots&\vdots&\vdots\\
    \alpha_1^{n-2}&\alpha_2^{n-2}&\dots&\alpha_n^{n-2}\\
    \alpha_1^h&\alpha_2^h&\dots&\alpha_n^h
\end{bmatrix}.$$
The following lemma determines the determinant of the matrix $G_h$.
\begin{lemma}\cite[Proposition 3.1]{Marchi2001}\label{detgh}
With notations as above, for $h\geq n-1$, we have
  $$\det(G_h)=S_{h-n+1}(\alpha_1,\alpha_2,\dots,\alpha_n)\cdot\det(M_{\Lambda}).$$   
\end{lemma}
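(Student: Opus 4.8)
\emph{Approach.} I would prove this by induction on $h\ge n-1$, using only multilinearity of the determinant, the factorization $\prod_{\ell=1}^{n}(X-\alpha_\ell)=\sum_{t=0}^{n}(-1)^t\sigma_t X^{n-t}$, and the fundamental relation \eqref{eqA} between the $\sigma_t$ and the $S_t$; no tool beyond what is already available above is needed, and, pleasantly, distinctness of the $\alpha_i$ is never used.

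\emph{The induction.} For $h=n-1$ the matrix $G_h$ is literally $M_\Lambda$ and $S_{h-n+1}=S_0=1$, so the claim holds. Assume $h\ge n$ and that the statement is known for all smaller exponents $\ge n-1$. Since each $\alpha_i$ is a root of $\prod_{\ell}(X-\alpha_\ell)$, we have $\alpha_i^{n}=\sum_{t=1}^{n}(-1)^{t+1}\sigma_t\alpha_i^{n-t}$, and multiplying by $\alpha_i^{h-n}$ gives $\alpha_i^{h}=\sum_{t=1}^{n}(-1)^{t+1}\sigma_t\alpha_i^{h-t}$ for every $i$. Thus the bottom row of $G_h$ is a fixed linear combination of the rows $(\alpha_1^{h-t},\dots,\alpha_n^{h-t})$, $t=1,\dots,n$, and expanding the determinant linearly in that row,
\[
\det(G_h)=\sum_{t=1}^{n}(-1)^{t+1}\sigma_t\,\det\!\big(G_h^{(h-t)}\big),
\]
where $G_h^{(m)}$ is $G_h$ with its bottom exponent $h$ replaced by $m$. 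Now for $0\le m\le h-1$ one has $\det(G_h^{(m)})=S_{m-n+1}\det(M_\Lambda)$ (convention $S_j=0$ for $j<0$): if $0\le m\le n-2$ two rows coincide and both sides vanish; if $m=n-1$ the matrix is $M_\Lambda$ and $S_0=1$; and if $m\ge n$ this is the induction hypothesis, applicable since $m\le h-1$. Putting $m=h-t$,
\[
\det(G_h)=\det(M_\Lambda)\sum_{t=1}^{n}(-1)^{t+1}\sigma_t\,S_{(h-n+1)-t}.
\]
With $N=h-n+1\ge1$, the conventions $\sigma_t=0$ for $t>n$ and $S_j=0$ for $j<0$ show the inner sum equals $\sum_{t=1}^{N}(-1)^{t+1}\sigma_t S_{N-t}$, which by \eqref{eqA} (transposing its $t=0$ term $\sigma_0S_N=S_N$) is exactly $S_N=S_{h-n+1}$. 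This closes the induction.

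\emph{Where the work is.} There is no genuinely hard step; the only thing to be careful about is the bookkeeping of the boundary summands $\det(G_h^{(h-t)})$ with $h-t\le n-1$ and the verification that padding or truncating the range of $t$ in the last sum costs nothing once the sign and vanishing conventions are fixed. As an independent check one can run the alternative route: Laplace-expand $\det(G_h)$ along its last row, identify each cofactor with the Vandermonde determinant on $\Lambda\setminus\{\alpha_i\}$, and reduce to $\sum_{i=1}^{n}\alpha_i^{h}\big/\!\prod_{j\ne i}(\alpha_i-\alpha_j)=S_{h-n+1}$, which falls out of the partial-fraction expansion of $t^{n-1}\big/\!\prod_{j=1}^{n}(1-\alpha_j t)=t^{n-1}\sum_{m\ge0}S_m t^m$; that route, however, needs a small density argument to pass from distinct nonzero $\alpha_i$ to the general case, so the inductive proof is cleaner.
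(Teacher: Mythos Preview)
Your proof is correct. The paper itself does not supply a proof of this lemma: it is quoted as Proposition~3.1 of \cite{Marchi2001} and used as a black box. So there is no ``paper's own proof'' to compare against here.

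For the record, your inductive argument is clean and fully self-contained within the notation already set up in Section~\ref{notations}: the recursion $\alpha_i^{h}=\sum_{t=1}^{n}(-1)^{t+1}\sigma_t\alpha_i^{h-t}$ reduces $\det(G_h)$ to a combination of $\det(G_{h-t})$'s, the boundary terms $h-t\le n-2$ vanish by row repetition, and relation~\eqref{eqA} closes the induction. Your remark that distinctness of the $\alpha_i$ is nowhere used is correct (when some $\alpha_i$ coincide both sides are zero anyway, but the argument does not even need that observation). The alternative partial-fraction route you sketch is essentially how the paper later \emph{uses} the lemma, via Lemma~\ref{paritylemma}; your inductive proof is indeed the tidier of the two.
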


\begin{lemma}\label{paritylemma}
    Let $\Lambda=\{\alpha_1,\alpha_2,\dots,\alpha_n\}\subseteq \mathbb{F}_q$ and $u_i=\prod_{1\leq j\leq n, j\neq i}(\alpha_i-\alpha_j)^{-1}$ for $1\leq i\leq n$. Then 
$$\sum_{i=1}^nu_i\alpha_i^h=\begin{cases}
  0 &\text{ if } 1\leq h\leq n-2\\
  S_{h-n+1}(\alpha_1,\alpha_2,\dots,\alpha_n) &\text{ if }  h\geq n-1
\end{cases}.$$
\end{lemma}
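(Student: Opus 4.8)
The plan is to identify $\sum_{i=1}^n u_i\alpha_i^h$ with a cofactor expansion of the generalized Vandermonde determinant $\det(G_h)$ and then appeal to Lemma \ref{detgh}. First I would expand $\det(G_h)$ along its last row $(\alpha_1^h,\dots,\alpha_n^h)$, obtaining $\det(G_h)=\sum_{i=1}^n(-1)^{n+i}\alpha_i^h\det(V_i)$, where $V_i$ is the $(n-1)\times(n-1)$ matrix obtained from $G_h$ by deleting the last row and the $i$-th column. Because the first $n-1$ rows of $G_h$ are exactly the monomials $1,\alpha,\dots,\alpha^{n-2}$, each $V_i$ is an ordinary Vandermonde matrix in the variables $\{\alpha_j:j\neq i\}$, so $\det(V_i)=\prod_{j<l,\ j,l\neq i}(\alpha_l-\alpha_j)$.

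Next I would compare this with $\det(M_\Lambda)=\prod_{1\le j<l\le n}(\alpha_l-\alpha_j)$. Dividing, the only factors that survive are those involving $\alpha_i$, and a short computation gives $\det(M_\Lambda)/\det(V_i)=(-1)^{n-i}\prod_{j\neq i}(\alpha_i-\alpha_j)=(-1)^{n+i}/u_i$. Hence $(-1)^{n+i}\det(V_i)=u_i\det(M_\Lambda)$, and substituting back yields $\det(G_h)=\bigl(\sum_{i=1}^n u_i\alpha_i^h\bigr)\det(M_\Lambda)$. Since the $\alpha_i$ are distinct, $\det(M_\Lambda)\neq 0$, so this factor may be cancelled and the problem reduces to evaluating $\det(G_h)$.

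Finally I would split into cases on $h$. If $1\le h\le n-2$, the last row of $G_h$ duplicates one of its first $n-1$ rows, so $\det(G_h)=0$ and the sum vanishes; this first range can alternatively be read off directly from the linear system solved by Cramer's rule in Section \ref{notations}, whose equation in the power $h$ reads $\sum_i u_i\alpha_i^h=0$. If $h\ge n-1$, Lemma \ref{detgh} gives $\det(G_h)=S_{h-n+1}(\alpha_1,\dots,\alpha_n)\det(M_\Lambda)$, whence $\sum_{i=1}^n u_i\alpha_i^h=S_{h-n+1}$ (and at $h=n-1$ this equals $S_0=1$, consistent with the Cramer computation recorded earlier). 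There is no real obstacle here: the one point requiring care is keeping track of the sign $(-1)^{n+i}$ in the cofactor expansion and checking that it matches the sign produced when $\det(V_i)$ is compared with $\det(M_\Lambda)$; the remaining manipulations are routine.
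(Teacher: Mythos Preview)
Your proof is correct and follows essentially the same route as the paper: both establish the identity $\sum_{i=1}^n u_i\alpha_i^h=\det(G_h)/\det(M_\Lambda)$ and then invoke Lemma~\ref{detgh} for $h\ge n-1$ (with the case $1\le h\le n-2$ coming from a repeated row). The only difference is that the paper cites \cite[Lemma~5]{Li2008} for this identity, whereas you derive it directly via the cofactor expansion along the last row of $G_h$, which is a clean self-contained verification.
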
 
\begin{proof}
From \cite[Lemma 5]{Li2008}, we have     $$\sum_{i=1}^nu_i\alpha_i^h=\det(G_h)/\det(M_{\Lambda})=0 \text{ if } 1\leq h\leq n-2.$$
The rest of the proof follows from Lemma \ref{detgh}.
\end{proof}

\section{A class of non-GRS MDS code}\label{knownGRS}
In this section, we discuss a class of non-GRS MDS codes introduced in \cite{Jin2024}. The authors only investigated sufficient conditions for these codes to be MDS. We prove that the sufficient condition obtained in \cite{Jin2024} is necessary as well. Furthermore, we determine the parity check matrix for these codes. 

 Let $k$ and $r$ be positive integers with $1\leq r\leq k-1$. Define $$V_{r,k}=\left \{\sum_{i=0}^{k-r-1}f_ix^i+\sum_{i=k-r+1}^kf_ix^i\in \mathbb{F}_q[x]\ \mid f_i\in \mathbb{F}_q\right \}.$$
Let $n\leq q$ and $\Lambda=\{\alpha_1,\alpha_2,\dots \alpha_n\}\subseteq \mathbb{F}_q$. Define 
$$C_{r,k}=\{ (f(\alpha_1),f(\alpha_2),\dots,f(\alpha_n))\ \mid  f(x)\in V_{r,k}\}.$$ Then $C_{r,k}$ is a linear code over $\mathbb{F}_q$ with parameters $[n,k]$ and a generator matrix
$$G_{r,k}=\begin{bmatrix}
    1&1& \dots &1\\
    \alpha_1& \alpha_2& \dots&\alpha_n\\
    \alpha_1^2& \alpha_2^2& \dots&\alpha_n^2\\
    \vdots &\vdots &\dots&\vdots\\
    \alpha_1^{k-r-1}& \alpha_2^{k-r-1}& \dots&\alpha_n^{k-r-1}\\
    \alpha_1^{k-r+1}& \alpha_2^{k-r+1}& \dots&\alpha_n^{k-r+1}\\
    \vdots &\vdots &\dots&\vdots\\
    \alpha_1^{k-1}& \alpha_2^{k-1}& \dots&\alpha_n^{k-1}\\
    \alpha_1^{k}& \alpha_2^{k}& \dots&\alpha_n^{k}\\
\end{bmatrix}_{k\times n}.$$

\begin{proposition}\cite[Proposition VI.1]{Jin2024}\label{sufcrkmds}
  Let $\Lambda=\{\alpha_1,\alpha_2,\dots \alpha_n\}\subseteq \mathbb{F}_q$, $6\leq 2k\le n$  and $1\leq r\leq k-1$ such that
  \begin{equation*}
      \sum_{1\leq j_1<\dots <j_r\leq k}\alpha_{i_{j_1}}\alpha_{i_{j_2}}\dots\alpha_{i_{j_r}}\neq 0\tag{$\ast$} \end{equation*}
for all pairwise distinct elements $\alpha_{i_{j}}$ 
with $1 \leq i_1 < i_2 < \dots < i_k \leq n$. Then the code $C_{k,r}$ is an $[n,k]$ non-GRS MDS code.
\end{proposition}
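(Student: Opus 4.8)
The plan is to show two things: (i) that condition $(\ast)$ is equivalent to every $k \times k$ submatrix of $G_{r,k}$ being nonsingular (so that Lemma~\ref{mdsnonsingular} gives the MDS property), and (ii) that when $6 \le 2k \le n$ the resulting MDS code is not equivalent to any GRS code. For part (i), I would fix a $k$-subset $\{i_1 < \dots < i_k\}$ of columns and compute the determinant of the corresponding $k \times k$ submatrix of $G_{r,k}$. This submatrix is exactly a Vandermonde-type matrix in $\alpha_{i_1}, \dots, \alpha_{i_k}$ with the row of exponent $k-r$ deleted and the row of exponent $k$ appended --- that is, the matrix $M_k^{(r)}$ of Proposition~\ref{detk-r} applied to the $k$ chosen evaluation points (after noting that deleting row $k-r$ from the list $0,1,\dots,k$ and keeping through exponent $k$ is the $r$-th "gap" case). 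Hence by Proposition~\ref{detk-r} its determinant equals $\det(M_{\Lambda'}) \cdot \sigma_r(\alpha_{i_1},\dots,\alpha_{i_k})$, where $\Lambda' = \{\alpha_{i_1},\dots,\alpha_{i_k}\}$. Since the $\alpha_i$ are pairwise distinct, $\det(M_{\Lambda'}) \ne 0$ always, so the submatrix is nonsingular if and only if $\sigma_r(\alpha_{i_1},\dots,\alpha_{i_k}) \ne 0$, which is precisely $(\ast)$ for that $k$-subset. Ranging over all $k$-subsets gives the "if and only if"; in particular $(\ast)$ implies $C_{r,k}$ is MDS, and conversely an MDS $C_{r,k}$ forces $(\ast)$.

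For part (ii) --- non-GRS-ness --- I would use the Schur-square (component-wise product) criterion, which is the standard obstruction and the one used in \cite{Jin2024}: for a GRS code $C$ of dimension $k$ with $2k-1 \le n$, one has $\dim(C \star C) = 2k-1$, whereas one shows $\dim(C_{r,k} \star C_{r,k}) > 2k-1$. Concretely, $C_{r,k} \star C_{r,k}$ is spanned by the evaluation vectors of all products $x^a x^b$ with $a, b \in \{0,1,\dots,k\}\setminus\{k-r\}$; these exponents $a+b$ range over a set that, because the single exponent $k-r$ (with $1 \le k-r \le k-1$) is the only one missing from $\{0,\dots,k\}$, covers all of $\{0, 1, \dots, 2k\}$ except possibly it fails to omit enough to keep the dimension at $2k-1$. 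One checks the span has dimension at least $2k$ (e.g. the exponents $0,1,\dots,2k$ minus at most the unreachable values), using $n \ge 2k$ so that these power-of-$\alpha$ vectors in $\mathbb{F}_q^n$ are linearly independent. This exceeds $2k-1$, so $C_{r,k}$ cannot be GRS. The hypothesis $6 \le 2k$ (i.e. $k \ge 3$) is needed so that $k-1 \ge 2$, ensuring the missing exponent $k-r$ genuinely sits strictly between two present exponents and the Schur square genuinely grows; small $k$ would be degenerate.

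The main obstacle I anticipate is the bookkeeping in part (ii): carefully identifying the exact dimension of $C_{r,k}\star C_{r,k}$ as a function of $r$ and $k$, i.e. determining which power vectors $\alpha^{(0)}, \alpha^{(1)}, \dots$ actually appear among the pairwise products of exponents drawn from $\{0,\dots,k\}\setminus\{k-r\}$, and confirming that this dimension is strictly larger than $2k-1$ for every admissible $r$ in the range $1 \le r \le k-1$ under $2k \le n$. One must be slightly careful at the extreme values $r = 1$ and $r = k-1$. By contrast, part (i) is essentially a direct invocation of Proposition~\ref{detk-r} once the submatrix is recognized as an $M_k^{(r)}$, so it should go through cleanly; I would present it first, then handle the Schur-square argument, and finally combine them to conclude that $(\ast)$ together with $6 \le 2k \le n$ yields an $[n,k]$ non-GRS MDS code.
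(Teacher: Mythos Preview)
The paper does not supply its own proof of this proposition; it is quoted from \cite{Jin2024}, and in the proof of Theorem~\ref{iffcrkmds} the paper simply invokes it for the sufficiency direction and the non-GRS claim. Your plan is nonetheless on target relative to both the paper and the cited source. For part~(i), computing each $k\times k$ minor of $G_{r,k}$ via Proposition~\ref{detk-r} as a Vandermonde factor times $\sigma_r(\alpha_{i_1},\dots,\alpha_{i_k})$ is exactly the calculation the paper carries out in the proof of Theorem~\ref{iffcrkmds} (there phrased for the converse direction, but the computation is identical). For part~(ii), the Schur-square obstruction is precisely the technique used in \cite{Jin2024}, as the paper's introduction explicitly notes; your sumset analysis of $E+E$ with $E=\{0,\dots,k\}\setminus\{k-r\}$ is the right mechanism, and your flagged concern about the extremes $r\in\{1,k-1\}$ when $n=2k$ is legitimate bookkeeping (there $|E+E|=2k$ rather than $2k+1$, and one must verify those $2k$ power vectors are genuinely independent in $\mathbb{F}_q^{2k}$, or else handle that edge separately).
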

The condition $(\ast)$ means that for all $ k$-subsets of $\{\alpha_1,\alpha_2,\dots,\alpha_n\}\subseteq \mathbb{F}_q$, the sum of all $r$-element-products in the set is nonzero.

\begin{theorem}\label{iffcrkmds}
     Let $\Lambda=\{\alpha_1,\alpha_2,\dots \alpha_n\}\subseteq \mathbb{F}_q$, $6\leq 2k\le n$ and $1\leq r\leq k-1$. Then the code $C_{k,r}$ generated by $G_{r,k}$ is an $[n,k]$ MDS code if and only if
  \begin{equation*}
      \sum_{1\leq j_1<\dots <j_r\leq k}\alpha_{i_{j_1}}\alpha_{i_{j_2}}\dots\alpha_{i_{j_r}}\neq 0\tag{$\ast$} \end{equation*}
for all pairwise distinct elements $\alpha_{i_{j}}$ 
with $\{ i_1, i_2,\dots,i_k\}\subseteq \{1,2,\dots,n\}$. Moreover, in this case the code is non-GRS.
\end{theorem}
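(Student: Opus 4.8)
The proof has two directions to establish, plus the non-GRS claim. Proposition \ref{sufcrkmds} already gives us that $(\ast)$ implies $C_{k,r}$ is non-GRS MDS, so the bulk of the new work is the converse: if $C_{k,r}$ is MDS, then $(\ast)$ holds. By Lemma \ref{mdsnonsingular}, $C_{k,r}$ is MDS if and only if every $k\times k$ submatrix of $G_{r,k}$ is nonsingular. So the plan is to compute the determinant of an arbitrary $k\times k$ submatrix of $G_{r,k}$ — i.e. the submatrix indexed by a $k$-subset $\{i_1 < \cdots < i_k\}$ of columns — and show this determinant equals (up to a nonzero Vandermonde-type factor) exactly the quantity $\sum_{1\le j_1<\cdots<j_r\le k}\alpha_{i_{j_1}}\cdots\alpha_{i_{j_r}}$ appearing in $(\ast)$.

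The key computational step: the $k\times k$ submatrix picking columns $i_1,\dots,i_k$ is precisely the matrix with rows $\alpha^0,\alpha^1,\dots,\alpha^{k-r-1},\alpha^{k-r+1},\dots,\alpha^{k-1},\alpha^{k}$ — that is, a Vandermonde matrix of order $k$ with the row of exponent $k-r$ deleted and a row of exponent $k$ appended. This is exactly the matrix $M_n^{(r)}$ of Proposition \ref{detk-r}, but with $n$ replaced by $k$ and the variables taken to be $\alpha_{i_1},\dots,\alpha_{i_k}$. Hence by Proposition \ref{detk-r},
\[
\det = \prod_{1\le s<t\le k}(\alpha_{i_t}-\alpha_{i_s})\cdot \sigma_r(\alpha_{i_1},\dots,\alpha_{i_k}).
\]
Since the $\alpha_i$ are pairwise distinct, the Vandermonde factor is nonzero, so this determinant is nonzero if and only if $\sigma_r(\alpha_{i_1},\dots,\alpha_{i_k})\ne 0$, which is precisely condition $(\ast)$ for the subset $\{i_1,\dots,i_k\}$. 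Ranging over all $k$-subsets gives the equivalence. For the non-GRS part, once $(\ast)$ holds the code is MDS, and Proposition \ref{sufcrkmds} already tells us it is non-GRS; alternatively one cites the Schur-square argument of \cite{Jin2024} directly.

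The only genuine obstacle is matching the submatrix of $G_{r,k}$ to the hypothesis of Proposition \ref{detk-r}: one must check that deleting the single row $\alpha^{k-r}$ from the order-$k$ Vandermonde block (rows of exponents $0$ through $k-1$) and appending the row $\alpha^{k}$ yields exactly $M_k^{(r)}$ in the variables $\alpha_{i_1},\dots,\alpha_{i_k}$. This is a bookkeeping check on exponents — the missing exponent is $k-r$, and in Proposition \ref{detk-r} with order $k$ the missing exponent is $(k-r-1)+1 = k-r$ — so it goes through, but it is worth writing out carefully since the index shift between "length $n$" in the original statement and "order $k$" for a submatrix is where sign or off-by-one errors would creep in. Everything else (pairwise distinctness forcing the Vandermonde factor nonzero, the biconditional over all $k$-subsets) is immediate.
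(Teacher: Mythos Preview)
Your proposal is correct and follows essentially the same approach as the paper: invoke Proposition~\ref{sufcrkmds} for sufficiency and the non-GRS claim, and for necessity apply Proposition~\ref{detk-r} to an arbitrary $k\times k$ submatrix of $G_{r,k}$ to factor its determinant as the Vandermonde product times $\sigma_r(\alpha_{i_1},\dots,\alpha_{i_k})$, then use pairwise distinctness of the $\alpha_i$ to conclude. The paper's proof is identical in structure, so there is nothing further to add.
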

\begin{proof}
The condition $(\ast)$ and non-GRS follows from Proposition \ref{sufcrkmds}. Let $C_{r,k}$ be an MDS code. Let $\{i_1,i_2,\dots,i_k\}\subseteq \{1,2,\dots,n\}$ be an arbitrary set with cardinality $k$ and
$$B=\begin{bmatrix}
    1&1& \dots &1\\
    \alpha_{i_1}& \alpha_{i_2}& \dots&\alpha_{i_k}\\
    \alpha_{i_1}^2& \alpha_{i_2}^2& \dots&\alpha_{i_k}^2\\
    \vdots &\vdots &\dots&\vdots\\
    \alpha_{i_1}^{k-r-1}& \alpha_{i_2}^{k-r-1}& \dots&\alpha_{i_k}^{k-r-1}\\
    \alpha_{i_1}^{k-r+1}& \alpha_{i_2}^{k-r+1}& \dots&\alpha_{i_k}^{k-r+1}\\
    \vdots &\vdots &\dots&\vdots\\
    \alpha_{i_1}^{k-1}& \alpha_{i_2}^{k-1}& \dots&\alpha_{i_k}^{k-1}\\
    \alpha_{i_1}^{k}& \alpha_{i_2}^{k}& \dots&\alpha_{i_k}^{k}\\
\end{bmatrix}_{k\times k}$$ be the corresponding   $k\times k$ sub-matrix of $G_{r,k}$. Since any $k\times k$ sub-matrix of $G_{r,k}$ is non-singular, we have $\det (B)\neq 0$. By Proposition \ref{detk-r}, we have 
$$ \det(B)= \prod_{1\leq s<t\leq k}(\alpha_{i_t}-\alpha_{i_s})\cdot \sum_{1\leq j_1<j_2<\dots<j_r\leq k}\alpha_{i_{j_1}}\alpha_{i_{j_2}}\dots \alpha_{i_{j_r}}. $$
Also, $\alpha_i$'s are distinct. Therefore, $\det(B)\neq 0$ if and only if
$$ \sum_{1\leq j_1<j_2<\dots<j_r\leq k}\alpha_{i_{j_1}}\alpha_{i_{j_2}}\dots \alpha_{i_{j_r}}\neq 0.$$
This completes the proof.
\end{proof}

\begin{theorem}\label{paritygrk}
Let 
$$H_{r,k}=\begin{bmatrix}
    u_1&u_2&\dots&u_n\\
u_1\alpha_1&u_2\alpha_2&\dots&u_n\alpha_n\\
u_1\alpha_1^2&u_2\alpha_2^2&\dots&u_n\alpha_n^2\\
\vdots&\vdots&\vdots&\vdots\\
u_1\alpha_1^{n-k-2}&u_2\alpha_2^{n-k-2}&\dots&u_n\alpha_n^{n-k-2}\\
u_1\Lambda_{r,1}&u_2\Lambda_{r,2}&\dots&u_n\Lambda_{r,n}\\
\end{bmatrix}.$$
Then $H_{r,k}$ is a parity check matrix for the code $C_{r,k}$ generated by $G_{r,k}$.
\end{theorem}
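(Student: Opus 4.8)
The plan is to verify that $H_{r,k}$ is a parity check matrix for $C_{r,k}$ by showing two things: first, that $H_{r,k}$ has rank $n-k$, so that its row space has the correct dimension $n-k$; and second, that $G_{r,k} H_{r,k}^T = 0$, i.e. every row of $G_{r,k}$ is orthogonal to every row of $H_{r,k}$. Since $C_{r,k}$ has dimension $k$ and $\dim C_{r,k}^\perp = n-k$, these two facts together force the row space of $H_{r,k}$ to equal $C_{r,k}^\perp$.

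For the orthogonality computation, I would fix a row of $G_{r,k}$, indexed by an exponent $e \in \{0,1,\dots,k-r-1\} \cup \{k-r+1,\dots,k\}$, and a row of $H_{r,k}$. There are two cases. For a row of $H_{r,k}$ of the form $(u_1\alpha_1^m,\dots,u_n\alpha_n^m)$ with $0 \le m \le n-k-2$, the inner product with the $e$-th row of $G_{r,k}$ is $\sum_{i=1}^n u_i \alpha_i^{m+e}$, where $m+e$ ranges over $\{0,\dots,n-2\}$ since $e \le k$ and $m \le n-k-2$ gives $m+e \le n-2$; by Lemma~\ref{paritylemma} this sum vanishes when $1 \le m+e \le n-2$, and when $m+e=0$ (only possible if $m=e=0$) the sum $\sum u_i$ is $S_{1-n}=0$ for $n \ge 2$ — actually here one should check $\sum_{i=1}^n u_i = 0$ directly, which follows from Lemma~\ref{paritylemma} with $h=0$ reinterpreted, or simply from the Cramer's rule setup on page~\pageref{eqA} giving $\sum u_i \alpha_i^h = 0$ for $0 \le h \le n-2$. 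The more delicate case is the last row of $H_{r,k}$, namely $(u_1\Lambda_{r,1},\dots,u_n\Lambda_{r,n})$. Here the inner product with the $e$-th row of $G_{r,k}$ is
$$\sum_{i=1}^n u_i \alpha_i^e \Lambda_{r,i} = \sum_{i=1}^n u_i \alpha_i^e \sum_{j=0}^r (-1)^j \sigma_j \alpha_i^{(n-k)+(r-1)-j} = \sum_{j=0}^r (-1)^j \sigma_j \sum_{i=1}^n u_i \alpha_i^{(n-k)+(r-1)+e-j}.$$
By Lemma~\ref{paritylemma}, the inner sum $\sum_i u_i \alpha_i^{(n-k)+(r-1)+e-j}$ equals $0$ if the exponent lies in $[1,n-2]$ and equals $S_{(r-1)+e-k-j+1} = S_{r+e-k-j}$ if the exponent is $\ge n-1$, i.e. if $(r-1)+e-j \ge k-1$, i.e. $j \le e+r-k$. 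So the whole expression collapses to $\sum_{j=0}^{\min(r,\,e+r-k)} (-1)^j \sigma_j S_{r+e-k-j}$, and since $e \le k$ forces $e+r-k \le r$, this is $\sum_{j=0}^{e+r-k} (-1)^j \sigma_j S_{(r+e-k)-j}$. Setting $N = r+e-k$, this is exactly $\sum_{t=0}^N (-1)^t \sigma_t S_{N-t}$, which by the Newton-type identity \eqref{eqA} is $0$ whenever $N \ge 1$, while for $N=0$ it is $S_0 = 1 \ne 0$; but $N = r+e-k \ge 1$ requires $e \ge k-r+1$, and the only excluded exponent in $V_{r,k}$ is $e = k-r$, which gives $N = 0$ — so I must confirm $e = k-r$ is not among the rows of $G_{r,k}$, which is precisely the defining gap in $V_{r,k}$. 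For the allowed exponents $e \le k-r-1$ we get $N = e+r-k \le -1 < 0$, and by the convention $S_t = 0$ for $t<0$ (and $\sigma$ only nonneg indices) the sum is empty, hence $0$; for $e \ge k-r+1$ we get $N \ge 1$ and \eqref{eqA} applies. Either way the inner product is $0$.

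For the rank claim, I would argue that the top $n-k-1$ rows of $H_{r,k}$, together with the last row, span an $(n-k)$-dimensional space. One clean route: all the $u_i$ are nonzero, so after scaling columns $H_{r,k}$ has the same rank as the matrix with rows $(\alpha_1^m,\dots,\alpha_n^m)$ for $0 \le m \le n-k-2$ and $(\Lambda_{r,1},\dots,\Lambda_{r,n})$; an $(n-k)\times(n-k)$ minor is, up to the Vandermonde-type factor and by Proposition~\ref{detk-r} (applied to the polynomial-combination last row $\Lambda_{r,i} = \sum_{j}(-1)^j\sigma_j \alpha_i^{(n-k)+(r-1)-j}$, whose leading term in $\alpha_i$ is $\alpha_i^{(n-k)+(r-1)}$ when $r \ge 1$), nonzero exactly under condition~$(\ast)$ — which holds since we already know $C_{r,k}$ is MDS. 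Alternatively, and more simply, once $G_{r,k} H_{r,k}^T = 0$ is established, the $k$ rows of $G_{r,k}$ are linearly independent (it contains a nonsingular $k \times k$ Vandermonde-type submatrix, as $C_{r,k}$ is MDS), so the row space of $H_{r,k}$ is contained in a space of dimension $n-k$; and since it is contained in $C_{r,k}^\perp$ which has dimension exactly $n-k$, it suffices to note $H_{r,k}$ has $n-k$ rows and exhibit one nonzero $(n-k)$-minor, or simply observe the first $n-k-1$ rows already form a nonsingular Vandermonde system on any $n-k-1$ of the distinct $\alpha_i$ and the last row is not in their span (again by an argument via Proposition~\ref{detk-r}).

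The main obstacle is bookkeeping the exponent ranges in the last-row orthogonality computation: one must carefully track, as $j$ runs from $0$ to $r$ and $e$ over the allowed set, exactly when the exponent $(n-k)+(r-1)+e-j$ falls below $n-1$ versus at or above it, so that Lemma~\ref{paritylemma} supplies the right value; and one must match the resulting truncated sum precisely to the identity~\eqref{eqA} with the correct upper limit $N = r+e-k$, using the conventions $S_t = 0$ for $t < 0$ and $\sigma_t = 0$ for $t > n$. The role of the missing exponent $k-r$ in $V_{r,k}$ is the crux: it is exactly the value of $e$ that would produce $N=0$ and hence a nonzero inner product $S_0 = 1$, so the gap in $V_{r,k}$ is what makes $H_{r,k}$ genuinely orthogonal to all of $C_{r,k}$, and this is the point where the structure of the construction is essential rather than incidental.
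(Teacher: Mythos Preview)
Your orthogonality argument is correct and essentially identical to the paper's: for the last row of $H_{r,k}$ you reduce the inner product to $\sum_{j=0}^{N}(-1)^j\sigma_j S_{N-j}$ with $N=r+e-k$ and invoke~\eqref{eqA}, and you correctly identify that the forbidden exponent $e=k-r$ is the unique value giving $N=0$ and hence a nonzero pairing.

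The rank argument has a gap. You invoke the MDS condition~$(\ast)$, but the theorem does not assume $C_{r,k}$ is MDS, and the paper's proof does not use it. Your proposed use of Proposition~\ref{detk-r} on an $(n-k)\times(n-k)$ minor also does not go through as stated: expanding $\Lambda_{r,i}$ by multilinearity and applying Lemma~\ref{detgh} yields $\det(M)\cdot\sum_{j=0}^r(-1)^j\sigma_j(\alpha_1,\dots,\alpha_n)\,S_{r-j}(\alpha_{i_1},\dots,\alpha_{i_{n-k}})$, and since the $\sigma_j$ are over all $n$ points while the $S_{r-j}$ are over only the chosen $n-k$, neither~\eqref{eqA} nor condition~$(\ast)$ applies. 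The paper's argument is in fact the observation you already made, turned around: take $D=(\alpha_1^{k-r},\dots,\alpha_n^{k-r})$, the missing row of $G_{r,k}$. Then $D h_m^T=\sum_i u_i\alpha_i^{(k-r)+(m-1)}=0$ for $1\le m\le n-k-1$ by Lemma~\ref{paritylemma} (the exponent lies in $[1,n-2]$), while your own computation at $e=k-r$ gives $D h_{n-k}^T=S_0=1\neq 0$. Hence $h_{n-k}\notin\operatorname{span}\{h_1,\dots,h_{n-k-1}\}$ and $H_{r,k}$ has full row rank $n-k$ with no MDS hypothesis.
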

\begin{proof}
Let $g_l$ and $h_m$ be the $l^{\rm{th}}$ and $m^{\rm{th}}$ rows of matrices $G_{r,k}$ and $H_{r,k}$, respectively. We show that $g_l h_m^T=0$ for $1\leq l\leq k$ and $1\leq m\leq n-k$. Using Lemma \ref{paritylemma}, we conclude that
if $1\leq l\leq k-r,\  1\leq m\leq n-k-1$, then
$$g_l h_m^T=
     \sum_{i=1}^n u_i\alpha_i^{l+m-2}=0.$$
     If $k-r+1\leq l\leq k, \ 1\leq m\leq n-k-1$, then
     $$g_lh_m^T=\sum_{i=1}^n u_i\alpha_i^{l+m-1}=0.$$
If $1\leq l\leq k-r$ and $m=n-k$, then
$$\sum_{i=1}^nu_i\alpha_i^{l-1}\Lambda_{r,i}=\sum_{i=1}^nu_i\alpha_i^{l-1}\sum_{j=0}^r(-1)^j\sigma_j\alpha_i^{(n-k)+(r-1)-j}=\sum_{j=0}^r(-1)^j\sigma_j\sum_{i=1}^nu_i\alpha_i^{(l-1)+(n-k)+(r-1)-j}=0,$$
since $(l-1)+(n-k)+(r-1)-j\leq \ n-2$ for $1\leq l\leq k-r$ and $0\leq j\leq r$.\\
If $k-r+1\leq l\leq k$ and $m=n-k$, then
\begin{align*}
    \begin{split}
\sum_{i=1}^nu_i\alpha_i^{l}\Lambda_{r,i}&=\sum_{i=1}^nu_i\alpha_i^{l}\sum_{j=0}^r(-1)^j\sigma_j\alpha_i^{(n-k)+(r-1)-j}\\
    &=\sum_{j=0}^r(-1)^j\sigma_j\sum_{i=1}^nu_i\alpha_i^{l+(n-k)+(r-1)-j} \\
    &=\sum_{j=0}^{l-k+r}(-1)^j\sigma_j\sum_{i=1}^nu_i\alpha_i^{l+(n-k)+(r-1)-j}+\sum_{j=l-k+r+1}^r(-1)^j\sigma_j\sum_{i=1}^nu_i\alpha_i^{l+(n-k)+(r-1)-j},\\
&=\sum_{j=0}^{l-k+r}(-1)^j\sigma_j\sum_{i=1}^nu_i\alpha_i^{l+(n-k)+(r-1)-j}+0, 
    \end{split}
    \end{align*}
since  $l+(n-k)+(r-1)-j\leq n-2$ for  $j\geq l-k+r+1$.
 Note that $l+(n-k)+(r-1)-j\geq n-1$. Therefore, again by Lemma \ref{paritylemma}, we have 
$$\sum_{i=1}^nu_i\alpha_i^{l}\Lambda_{r,i}=\sum_{j=0}^{l-k+r}(-1)^j\sigma_jS_{l-k+r-j}=0,$$ 
where the last equality holds by Eq \ref{eqA}, since $l-k+r\geq 1$.
Thus, every row of $H_{r,k}$ is orthogonal to each row of $G_{r,k}$.
Now, we show that the rows of $H_{r,k}$ are linearly independent. It is easy to see that the first $(n-k-1)$ rows of $H_{r,k}$ are linearly independent. Let $h_{n-k}$ be $(n-k)^{\rm{th}}$-row of $H_{r,k}$. Let $D=(\alpha_1^{k-r},\alpha_2^{k-r},\dots,\alpha_{n}^{k-r})\in \mathbb{F}_q^n$. Then $D$ is orthogonal to each row $h_1,h_2,\dots,h_{n-k-1}$ of $H_{r,k}$, that is, $Dh_m^T=0$ for $1\leq m\leq n-k-1$. But
\begin{align*}
    \begin{split}
        Dh_{n-k}^T=&\sum_{i=1}^nu_i\alpha_i^{k-r}\sum_{j=0}^r(-1)^j\sigma_j\alpha_i^{(n-k)+(r-1)-j}=\sum_{i=1}^nu_i\sum_{j=0}^r(-1)^j\sigma_j\alpha_i^{n-1-j}.
    \end{split}
\end{align*}
By Lemma \ref{paritylemma}, 
$$Dh_{n-k}^T=\sum_{j=0}^r(-1)^j\sigma_j\sum_{i=1}^nu_i\alpha_i^{n-1-j}=\sum_{i=1}^nu_i\sigma_0\alpha_i^{n-1}=1.$$
That is, $Dh_{n-k}^T\neq 0$. Thus, $h_{n-k}\notin  \text{span}\{h_1,h_2,\dots,h_{n-k-1}\}$. 
Therefore, the rows of $H_{r,k}$ are linearly independent. Hence, $H_{r,k}$ is a parity check matrix for the code $C_{r,k}$.
\end{proof}

\section{First class of extended non-GRS MDS codes}\label{firstextension}
In this section, we construct a class of codes generated by the matrix obtained by adding one column $(0,0,\dots,1)^T$ in the matrix $G_{r,k}$. First, we determine the parity check matrix for these codes. Then we show that these codes are exactly the extended codes of $C_{r,k}$ and establish necessary and sufficient conditions for these codes to be non-GRS MDS codes.  First, we define extended codes.
\begin{definition}[Extended code]
     Let $w=(w_1,w_2,\dots,w_n)$ be a non-zero element in $\mathbb{F}_q^n$ and $C$ be an $[n,k,d]_q$ code. Then 
$$\overline{C}(w)=\left \{ (c_1,c_2,\dots,c_n,c_{n+1}=\sum_{i=1}^n w_ic_i )|\ (c_1,c_2,\dots,c_n)\in C\right \}$$
is an $[n+1,k,\overline{d}]_q$ code with $\overline{d}=d$ or $\overline{d}=d+1$, called the extended code of $C$ with respect to $w$.
\end{definition}
 We say a code $C'$ is an extended code of $C$ if $C'=\overline{C}(w)$ for some non-zero $w\in \mathbb{F}_q^n$. If $G$ and $H$ are generator and parity-check matrices for $C$, then a generator and parity-check  matrix for $\overline{C}(w)$ are
$G_w=(G\ Gw^T)$ and $H_w=\begin{bmatrix}
    H& \bm{0}^T\\
    w&-1
\end{bmatrix}$.

The following lemma is an easy observation.
\begin{lemma}\cite[Lemma 5]{Wu2024}\label{extendnonrs}
   If the extended code $\overline{C}(w)$ is monomial equivalent to an GRS code, then the original code $C$ is also monomial equivalent to an GRS code. 
\end{lemma}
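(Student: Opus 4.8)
The plan is to realise the original code $C$ as a \emph{punctured code} of $\overline{C}(w)$ and then combine two stability facts: monomial equivalence is compatible with deleting a coordinate, and the class of GRS codes is closed under deleting a single coordinate (as long as the length stays at least the dimension). So the lemma is really an easy structural observation, as the paper's own phrasing suggests.

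First I would note, straight from the definition of the extended code, that deleting the last coordinate from every codeword of $\overline{C}(w)$ recovers exactly $C$; in terms of the generator matrices recorded above, $G_w=(G\mid Gw^{T})$, and removing its last column leaves $G$. Thus $C$ is precisely $\overline{C}(w)$ punctured at coordinate $n+1$.

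Now suppose $\overline{C}(w)$ is monomial equivalent to a GRS code $\Gamma$ of length $n+1$ and dimension $k$, say with a standard generator matrix $G_{\Gamma}$ built from evaluation points $\beta_1,\dots,\beta_{n+1}\in\mathbb{F}_q\cup\{\infty\}$ and nonzero multipliers $v_1,\dots,v_{n+1}$. Monomial equivalence gives an invertible $A\in\mathbb{F}_q^{k\times k}$ and an $(n+1)\times(n+1)$ monomial matrix $P$ (a permutation composed with an invertible diagonal matrix) with $AG_wP=G_{\Gamma}$. Let $t\in\{1,\dots,n+1\}$ be the coordinate of $\Gamma$ that corresponds to the last coordinate of $\overline{C}(w)$ under $P$; it is well defined because $P$ permutes coordinates bijectively. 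Deleting column $n+1$ of $G_w$ --- which leaves the generator matrix $G$ of $C$ --- corresponds on the right-hand side to deleting column $t$ of $G_{\Gamma}$, and the surviving identity exhibits $G$ (up to the invertible left factor $A$) as a column-permuted, column-rescaled copy of $G_{\Gamma}$ with column $t$ removed. Hence $C$ is monomial equivalent to the code generated by $G_{\Gamma}$ with column $t$ removed.

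It then remains to check, by inspection, that $G_{\Gamma}$ with one column removed generates a GRS code: it is the generator matrix of the GRS code on evaluation points $\{\beta_i : i\neq t\}$ with multipliers $\{v_i : i\neq t\}$ (and if $\beta_t=\infty$ the removed column is $(0,\dots,0,v_t)^{T}$, so what remains is still a valid GRS generator). Since $C$ is an $[n,k]$ code we have $n\ge k$, so this punctured GRS code still has dimension $k$ --- equivalently, GRS codes are MDS and puncturing an MDS code of length $n+1\ge k+1$ down to length $n\ge k$ keeps it MDS. Therefore $C$ is monomial equivalent to a GRS code, as claimed. I expect no genuine obstacle here: the only care needed is the routine bookkeeping of how the permutation and the diagonal scalars of $P$ restrict to the surviving $n$ coordinates, and the $\infty$-column shape. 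Note finally that only this single implication holds --- a GRS code can extend to a non-GRS MDS code --- which is exactly the phenomenon exploited in the later sections.
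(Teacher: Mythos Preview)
The paper does not actually prove this lemma; it is quoted verbatim from \cite[Lemma~5]{Wu2024} and used as a black box. So there is no in-paper argument to compare against.

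Your argument is correct and is the standard one: realise $C$ as the puncturing of $\overline{C}(w)$ at the last coordinate, observe that a monomial equivalence $\overline{C}(w)\to\Gamma$ carries this puncturing to a puncturing of the GRS code $\Gamma$ at a well-defined coordinate, and then use that GRS codes are closed under single-coordinate puncturing (trivially from the generator-matrix description, including the $\infty$ column). The only bookkeeping, which you handle, is tracking which column of $\Gamma$ is deleted and checking the dimension survives; both are routine. Your closing remark that the converse fails is also accurate and well placed.
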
 
 Let  
$$G_1=\begin{bmatrix}
    1&1& \dots &1 &0\\
    \alpha_1& \alpha_2& \dots&\alpha_n&0\\
    \alpha_1^2& \alpha_2^2& \dots&\alpha_n^2&0\\
    \vdots &\vdots &\dots&\vdots&\vdots\\
    \alpha_1^{k-r-1}& \alpha_2^{k-r-1}& \dots&\alpha_n^{k-r-1}&0\\
    \alpha_1^{k-r+1}& \alpha_2^{k-r+1}&\dots&\alpha_n^{k-r+1}&0\\
    \vdots &\vdots &\dots&\vdots&\vdots\\
    \alpha_1^{k-1}& \alpha_2^{k-1}& \dots&\alpha_n^{k-1}&0\\
    \alpha_1^{k}& \alpha_2^{k}& \dots&\alpha_n^{k}&1\\
\end{bmatrix}_{k\times (n+1)}$$ and $C_1$ be the linear code generated by $G_1$. Then the code $C_1$ can be seen as an evaluation code as follows:
\begin{equation*}
    \begin{split}
     C_1=&\{(f(\alpha_1),f(\alpha_2),\dots, f(\alpha_n), f_k)|\ f(x)\in V_{r,k} \}.
    \end{split}
\end{equation*}

\begin{theorem}\label{parityg1}
Let 
$$H_1=\begin{bmatrix}
    u_1&u_2&\dots&u_n&0\\
u_1\alpha_1&u_2\alpha_2&\dots&u_n\alpha_n&0\\
u_1\alpha_1^2&u_2\alpha_2^2&\dots&u_n\alpha_n^2&0\\
\vdots&\vdots&\vdots&\vdots&\vdots\\
u_1\alpha_1^{n-k-2}&u_2\alpha_2^{n-k-2}&\dots&u_n\alpha_n^{n-k-2}&0\\
u_1\Lambda_{r,1}&u_2\Lambda_{r,2}&\dots&u_n\Lambda_{r,n}&0\\
u_1\alpha_1^{n-k-1}& u_2\alpha_2^{n-k-1}&\dots&u_n\alpha_n^{n-k-1}&-1
\end{bmatrix}.$$
Then $H_1$ is a parity check matrix for the code $C_1$ generated by $G_1$. 
\end{theorem}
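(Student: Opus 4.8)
The plan is to verify directly that every row of $H_1$ is orthogonal to every row of $G_1$, and then check that the rows of $H_1$ are linearly independent, which is the standard two-step recipe for confirming a parity-check matrix of an $[n+1,k]$ code (note $H_1$ has $n+1-k$ rows). Write the generic row of $G_1$ as $(g_l \mid \varepsilon_l)$ where $g_l$ is the $l$-th row of $G_{r,k}$ and $\varepsilon_l$ is $0$ for $1\le l\le k-1$ and $1$ for $l=k$; similarly the first $n-k$ rows of $H_1$ are $(h_m \mid 0)$ with $h_m$ the $m$-th row of $H_{r,k}$, and the last row is $(d \mid -1)$ where $d=(u_1\alpha_1^{n-k-1},\dots,u_n\alpha_n^{n-k-1})$.

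For the orthogonality: when $m\le n-k$ the last coordinate of $H_1$'s row is $0$, so $(g_l\mid\varepsilon_l)\cdot(h_m\mid 0)^T = g_l h_m^T = 0$ by Theorem \ref{paritygrk} (the inner products there were already shown to vanish). The only genuinely new computation is pairing the last row of $H_1$ with the rows of $G_1$: for $1\le l\le k-1$ we need $\sum_i u_i\alpha_i^{l-1}\cdot\alpha_i^{n-k-1} = \sum_i u_i\alpha_i^{(l-1)+(n-k-1)}$, and since $(l-1)+(n-k-1)\le (k-2)+(n-k-1)=n-3\le n-2$, Lemma \ref{paritylemma} gives $0$; here the last coordinate contributes $\varepsilon_l\cdot(-1)=0$, so the total is $0$. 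For $l=k$ we get $\sum_i u_i\alpha_i^{k}\cdot\alpha_i^{n-k-1}=\sum_i u_i\alpha_i^{n-1}=S_0=1$ by Lemma \ref{paritylemma}, and the last coordinate contributes $1\cdot(-1)=-1$, so the sum is $1+(-1)=0$. Thus $H_1 G_1^T = 0$.

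For linear independence: the first $n-k-1$ rows of $H_1$ are linearly independent because their first $n-k-1$ coordinate-blocks already form (up to the nonzero scalars $u_i$) a Vandermonde pattern, exactly as argued in Theorem \ref{paritygrk}. The row $(h_{n-k}\mid 0)$ lies outside their span by the same argument used there — one exhibits the vector $D=(\alpha_1^{k-r},\dots,\alpha_n^{k-r},0)$, which kills $h_1,\dots,h_{n-k-1}$ but pairs with $h_{n-k}$ to give $1\ne 0$. Finally the last row $(d\mid -1)$ cannot lie in the span of the preceding $n-k$ rows because all of those have $0$ in the last coordinate while $(d\mid -1)$ has $-1$ there. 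Hence all $n-k+1$ rows of $H_1$ are independent, $H_1$ has the right rank $n+1-k$, and together with $H_1G_1^T=0$ and $\mathrm{rank}\,G_1=k$ this forces $H_1$ to be a parity-check matrix for $C_1$. I do not anticipate a real obstacle here: every piece reduces to Lemma \ref{paritylemma}, Eq.~\eqref{eqA}, and bookkeeping of exponent ranges already done in Theorem \ref{paritygrk}; the only mild care needed is tracking the extra last coordinate and confirming the degree bound $(l-1)+(n-k-1)\le n-2$ (equivalently $l\le k$) so that the pairing with the last row of $H_1$ behaves as claimed.
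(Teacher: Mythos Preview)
Your approach is essentially the paper's, and the argument is sound, but there is one small slip in the orthogonality check with the last row of $H_1$. You write the $l$-th row of $G_{r,k}$ as having exponent $l-1$ for all $1\le l\le k-1$, but recall that $G_{r,k}$ skips the exponent $k-r$: for $k-r+1\le l\le k-1$ (which occurs when $r\ge 2$) the entries are $\alpha_i^{\,l}$, not $\alpha_i^{\,l-1}$. Thus the inner product with the row $(d\mid -1)$ is $\sum_i u_i\alpha_i^{\,l+(n-k-1)}$ in that range. Fortunately your conclusion is unharmed, since the maximal exponent is then $(k-1)+(n-k-1)=n-2$ and Lemma~\ref{paritylemma} still yields zero; the paper simply treats the two ranges $1\le l\le k-r$ and $k-r+1\le l\le k-1$ separately for exactly this reason. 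Your linear-independence argument is in fact more explicit than the paper's (which asserts it is ``easy to see'' and moves on); otherwise the two proofs coincide.
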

\begin{proof}
It is easy to see that the rows of $H_1$ are linearly independent, and the first $n-k$ rows of $H_1$ are orthogonal to each row of $G_1$ from Theorem \ref{paritygrk}. We show that the $(n-k+1)^{\rm{th}}$-row of $H_1$ is also orthogonal to each row of $G_1$. Note that for $m=n-k+1$, we have 
 \begin{align*}
  g_l h_m^T= \begin{cases}
      \sum_{i=1}^nu_i\alpha_i^{(l-1)+(n-k-1)}& \text{if } 1\leq l\leq k-r\\
      \sum_{i=1}^nu_i\alpha_i^{l+(n-k-1)}& \text{if } k-r+1\leq l\leq k-1\\
      \sum_{i=1}^nu_i\alpha_i^{n-1}-1& \text{if } l=k.
  \end{cases}.
 \end{align*}
  By Lemma \ref{paritylemma}, $g_lh_m^T=0$.
This completes the proof.
\end{proof}

\begin{theorem}\label{c1extend}
The code $C_1$ is an extended code of $C_{r,k}$, that is, $C_1=\overline{C}_{r,k}(w)$ for some non-zero $w\in\mathbb{F}_q^n$.
\end{theorem}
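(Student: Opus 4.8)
The plan is to show that $C_1 = \overline{C}_{r,k}(w)$ for a suitable weight vector $w$, by identifying what linear functional on $C_{r,k}$ the extra coordinate of $C_1$ computes. Recall that a codeword of $C_1$ is $(f(\alpha_1),\dots,f(\alpha_n),f_k)$, where $f_k$ is the coefficient of $x^k$ in $f\in V_{r,k}$, while the corresponding codeword of $C_{r,k}$ is just $c=(f(\alpha_1),\dots,f(\alpha_n))$. So it suffices to express $f_k$ as a fixed $\mathbb{F}_q$-linear combination $\sum_{i=1}^n w_i f(\alpha_i)$ of the entries of $c$, with the $w_i$ depending only on $\Lambda$ (not on $f$). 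Once such $w=(w_1,\dots,w_n)$ is exhibited, the definition of the extended code gives $C_1 = \overline{C}_{r,k}(w)$ immediately, and $w\neq 0$ since the map $f\mapsto f_k$ is not identically zero on $V_{r,k}$.

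The concrete candidate is $w_i = u_i\,\alpha_i^{\,n-k-1}$ for $1\le i\le n$, which is exactly the last row (restricted to the first $n$ coordinates) of the parity-check matrix $H_1$ from Theorem \ref{parityg1}; the entry $-1$ in the final coordinate of that row of $H_1$ is precisely the relation $c_{n+1} = \sum_i w_i c_i$ in disguise. To verify this choice directly, write $f(x)=\sum_{i=0}^{k-r-1}f_ix^i+\sum_{i=k-r+1}^{k}f_ix^i$ and compute
\begin{equation*}
\sum_{i=1}^n w_i f(\alpha_i)=\sum_{i=1}^n u_i\alpha_i^{\,n-k-1}f(\alpha_i)=\sum_{j}f_j\sum_{i=1}^n u_i\alpha_i^{\,j+(n-k-1)},
\end{equation*}
where $j$ ranges over $\{0,1,\dots,k-r-1\}\cup\{k-r+1,\dots,k\}$. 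By Lemma \ref{paritylemma}, the inner sum $\sum_{i=1}^n u_i\alpha_i^{\,j+n-k-1}$ vanishes whenever $1\le j+n-k-1\le n-2$, i.e.\ whenever $0\le j\le k-1$ (and also for $j=0$, since then the exponent is $n-k-1\le n-2$ as $k\ge 1$). The only surviving term is $j=k$, for which the exponent is $n-1$ and the inner sum equals $S_0=1$ by Lemma \ref{paritylemma}. Hence $\sum_{i=1}^n w_i f(\alpha_i)=f_k$, exactly the last coordinate of the $C_1$-codeword.

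Putting this together: for every $f\in V_{r,k}$ the vector $(f(\alpha_1),\dots,f(\alpha_n),\sum_{i=1}^n w_i f(\alpha_i))$ equals $(f(\alpha_1),\dots,f(\alpha_n),f_k)$, which is the generic codeword of $C_1$; conversely every codeword of $\overline{C}_{r,k}(w)$ arises this way. Therefore $C_1=\overline{C}_{r,k}(w)$ with $w=(u_1\alpha_1^{\,n-k-1},\dots,u_n\alpha_n^{\,n-k-1})\neq 0$. I do not anticipate a serious obstacle here: the main point is simply to guess the correct $w$ (read off from $H_1$) and then the verification is a one-line application of Lemma \ref{paritylemma} together with the constraint $6\le 2k\le n$, which guarantees $n-k-1\ge k-1\ge 1$ so that all the relevant exponents lie in the range where Lemma \ref{paritylemma} applies cleanly. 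The only thing to be slightly careful about is the degenerate small exponent $j=0$ giving exponent $n-k-1$; since $k\le n/2$ forces $n-k-1\ge n/2-1\ge 2>0$, this still falls under the vanishing case of Lemma \ref{paritylemma}.
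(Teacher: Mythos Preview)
Your proof is correct, but it follows a genuinely different route from the paper's. The paper constructs a \emph{sparse} vector $w$ supported only on the first $k+1$ coordinates: it solves the $(k+1)\times(k+1)$ Vandermonde system $\sum_{j=1}^{k+1}y_j\alpha_j^{\ell}=0$ for $0\le\ell\le k-1$ and $\sum_{j=1}^{k+1}y_j\alpha_j^{k}=1$ by Cramer's rule, then pads with zeros. Your $w_i=u_i\alpha_i^{\,n-k-1}$ is instead a \emph{full-support} vector coming from the global $n\times n$ Vandermonde data, and your verification is a direct application of Lemma~\ref{paritylemma}. What your approach buys is a transparent link to Theorem~\ref{parityg1}: your $w$ is exactly the first $n$ entries of the last row of $H_1$, so $H_1$ visibly has the block shape $\bigl[\begin{smallmatrix}H_{r,k}&\mathbf 0\\ w&-1\end{smallmatrix}\bigr]$ prescribed for the parity-check matrix of an extended code, making Theorem~\ref{c1extend} essentially a corollary of Theorem~\ref{parityg1}. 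The paper's approach, by contrast, is self-contained (it does not need the $u_i$'s or Lemma~\ref{paritylemma}) and works under the slightly weaker hypothesis $n\ge k+1$ rather than the $2k\le n$ you invoke to keep the exponent $n-k-1$ in the range $[1,n-2]$.
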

\begin{proof}
We prove that $G_1=(G_{r,k}\ G_{r,k}w^T)$ for some non-zero $w\in\mathbb{F}_q^n$. By Cramer's rule, we know that 
$$\begin{bmatrix}
    1&1& \dots &1&1\\
    \alpha_1& \alpha_2& \dots&\alpha_k&\alpha_{k+1}\\
    \alpha_1^2& \alpha_2^2& \dots&\alpha_k^2&\alpha_{k+1}^2\\
    \vdots &\vdots &\dots&\vdots&\vdots\\
    
    \alpha_1^{k-1}& \alpha_2^{k-1}& \dots&\alpha_k^{k-1}&\alpha_{k+1}^{k-1}\\
    \alpha_1^{k}& \alpha_2^{k}& \dots&\alpha_k^{k}&\alpha_{k+1}^k\\
\end{bmatrix}\begin{bmatrix}
    x_1\\
    x_2\\
    x_3\\
    \vdots\\
    x_{k}\\
      x_{k+1}\\
\end{bmatrix}=\begin{bmatrix}
    0\\
    0\\
    0\\
    \vdots\\
    0 \\
    1
\end{bmatrix}$$
has a non-zero solution $y=(y_1,\dots,y_k,y_{k+1})\in \mathbb{F}_q^{k+1}$, where 
$$y_i=\prod_{1\leq j\leq k+1,\ j\neq i}(\alpha_i-\alpha_j)^{-1}, \text{ for  } i=1,2,\dots,k+1.$$
Take $w=(w_1,w_2,\dots,w_n)\in \mathbb{F}_q^n$, where $w_i=y_i$ for $1\leq i\leq k+1$ and $w_i=0$ for $k+2\leq i\leq n$. Then $ w \neq \bm{0}$ and 
$$\begin{bmatrix}
    1&1& \dots &1&1&\dots &1\\
    \alpha_1& \alpha_2& \dots&\alpha_k&\alpha_{k+1}&\dots &\alpha_n\\
    \alpha_1^2& \alpha_2^2& \dots&\alpha_k^2&\alpha_{k+1}^2&\dots &\alpha_n^2\\
    \vdots &\vdots &\dots&\vdots&\vdots&\vdots\\
    
    \alpha_1^{k-1}& \alpha_2^{k-1}& \dots&\alpha_k^{k-1}&\alpha_{k+1}^{k-1}&\dots &\alpha_n^{k-1}\\
    \alpha_1^{k}& \alpha_2^{k}& \dots&\alpha_k^{k}&\alpha_{k+1}^k&\dots &\alpha_n^k\\
\end{bmatrix}\begin{bmatrix}
    w_1\\
    w_2\\
    w_3\\
    \vdots\\
    w_{k}\\
      w_{k+1}\\
      \vdots\\
      w_n
\end{bmatrix}=\begin{bmatrix}
    0\\
    0\\
    0\\
    \vdots\\
    0 \\
    1
\end{bmatrix}.$$

Consequently, 
$$\begin{bmatrix}
    1&1& \dots &1\\
    \alpha_1& \alpha_2& \dots&\alpha_n\\
    \alpha_1^2& \alpha_2^2& \dots&\alpha_n^2\\
    \vdots &\vdots &\dots&\vdots\\
    \alpha_1^{k-r-1}& \alpha_2^{k-r-1}& \dots&\alpha_n^{k-r-1}\\
    \alpha_1^{k-r+1}& \alpha_2^{k-r+1}& \dots&\alpha_n^{k-r+1}\\
    \vdots &\vdots &\dots&\vdots\\
    \alpha_1^{k-1}& \alpha_2^{k-1}& \dots&\alpha_n^{k-1}\\
    \alpha_1^{k}& \alpha_2^{k}& \dots&\alpha_n^{k}\\
\end{bmatrix}_{k\times n}\begin{bmatrix}
    w_1\\
    w_2\\
    w_3\\
    \vdots\\
    w_k\\
      w_{k+1}\\
      \vdots\\
      w_n
\end{bmatrix}=\begin{bmatrix}
    0\\
    0\\
    0\\
    \vdots\\
    0 \\
    1
\end{bmatrix},$$
that is, $G_{r,k}w^T=(0,0,\dots,1)_{k\times1}^T$. Hence, $G_1=(G_{r,k}\ G_{r,k}w^T).$ This completes the proof.
\end{proof}
The following result on MDSness of the code $C_1$ generated by $G_1$ is stated in \cite[Remark 2]{Li2024} without proof. For completeness, here, we provide a proof using Lemma \ref{mdsnonsingular}.
\begin{theorem}\label{c1mds}
Let $1\leq r\leq k-1$ and  $C_1$ be the code generated by $G_1$. Then $C_1$ is MDS if and only if 
    \begin{equation*}
    \begin{split}
         &\sum_{1\leq j_1<\dots <j_r\leq k}\alpha_{i_{j_1}}\alpha_{i_{j_2}}\dots\alpha_{i_{j_r}}\neq 0,\ \ \ \ \ \ \  \ \ \ \ \ \ \ \ \ (\ast)\\
          &\sum_{1\leq j_1<\dots <j_{r-1}\leq k-1}\alpha_{i_{j_1}}\alpha_{i_{j_2}}\dots\alpha_{i_{j_{r-1}}}\neq 0\ \ \ \ \ \ \ \   (\ast\ast)
        \end{split}
    \end{equation*}
    for all pairwise distinct elements $\alpha_{i_j}$ with $ \{i_1,i_2,\dots,i_{k-1}\}, \{i_1,i_2,\dots,i_{k}\}\subseteq \{1,2,\dots,n\}$. 
\end{theorem}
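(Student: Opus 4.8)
The plan is to apply Lemma~\ref{mdsnonsingular}: $C_1$ is MDS if and only if every $k\times k$ submatrix of $G_1$ is non-singular. Since $G_1$ is obtained from $G_{r,k}$ by appending the single column $(0,0,\dots,1)^T$, a $k\times k$ submatrix of $G_1$ is of exactly one of two kinds: (i) it avoids the last column, and is then precisely a $k\times k$ submatrix of $G_{r,k}$ on some $k$-subset $\{i_1,\dots,i_k\}\subseteq\{1,\dots,n\}$; or (ii) it contains the last column $(0,0,\dots,1)^T$ together with $k-1$ columns of $G_{r,k}$ on some $(k-1)$-subset $\{i_1,\dots,i_{k-1}\}\subseteq\{1,\dots,n\}$. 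I will show that non-singularity of all submatrices of the first kind is equivalent to $(\ast)$ and of all submatrices of the second kind to $(\ast\ast)$; combining the two with Lemma~\ref{mdsnonsingular} then yields the theorem.

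For kind (i) the argument is exactly that used in the proof of Theorem~\ref{iffcrkmds}: by Proposition~\ref{detk-r} the determinant equals $\prod_{1\leq s<t\leq k}(\alpha_{i_t}-\alpha_{i_s})\cdot\sigma_r(\alpha_{i_1},\dots,\alpha_{i_k})$, which, since the $\alpha_i$ are distinct, is nonzero if and only if $\sigma_r(\alpha_{i_1},\dots,\alpha_{i_k})\neq 0$, i.e.\ condition $(\ast)$. For kind (ii) I expand the determinant along the last column; its only nonzero entry is the $1$ in the bottom row, so the determinant equals the $(k-1)\times(k-1)$ minor obtained by deleting the bottom row and the last column. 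That minor is the matrix on $\alpha_{i_1},\dots,\alpha_{i_{k-1}}$ whose row exponents run over $\{0,1,\dots,k-1\}\setminus\{k-r\}$. Applying Proposition~\ref{detk-r} with $n$ replaced by $k-1$ and $r$ replaced by $r-1$ (which is legitimate when $r\geq 2$), this minor equals $\prod_{1\leq s<t\leq k-1}(\alpha_{i_t}-\alpha_{i_s})\cdot\sigma_{r-1}(\alpha_{i_1},\dots,\alpha_{i_{k-1}})$, so it is nonzero if and only if $\sigma_{r-1}(\alpha_{i_1},\dots,\alpha_{i_{k-1}})\neq 0$, i.e.\ condition $(\ast\ast)$.

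There is no substantive obstacle here; the only care needed concerns kind (ii) when $r=1$. In that case the minor is simply the ordinary $(k-1)\times(k-1)$ Vandermonde matrix on distinct points, hence automatically non-singular, while $(\ast\ast)$ reduces to the always-true assertion that the empty product $\sigma_0=1$ is nonzero --- so the stated equivalence still holds, with $(\ast\ast)$ vacuous. One should also check that the ambient bounds ($1\le r\le k-1$, and $n$ large enough that both a $k$-subset and a $(k-1)$-subset of $\{1,\dots,n\}$ exist) make every index above meaningful, which they do. With both kinds of $k\times k$ submatrices analysed, Lemma~\ref{mdsnonsingular} gives the claimed characterisation.
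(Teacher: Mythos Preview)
Your proposal is correct and follows essentially the same approach as the paper's own proof: split the $k\times k$ submatrices according to whether they contain the appended column $(0,\dots,0,1)^T$, handle kind~(i) via Proposition~\ref{detk-r} exactly as in Theorem~\ref{iffcrkmds}, handle kind~(ii) by cofactor expansion along the last column followed by Proposition~\ref{detk-r} on the resulting $(k-1)\times(k-1)$ minor, and treat $r=1$ separately as the Vandermonde case. Your explicit identification of the row-exponent set $\{0,1,\dots,k-1\}\setminus\{k-r\}$ and the substitution $(n,r)\mapsto(k-1,r-1)$ in Proposition~\ref{detk-r} is a slightly cleaner bookkeeping than the paper's, but the argument is the same.
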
 
\begin{proof}
Let $v=(0,0,\dots,1)^T$. Let $B_1$ be an arbitrary $k\times k $ sub-matrix of $G_1$, which does not contain $v$ as a column.  Then $B_1$ is a sub-matrix of $G_{r,k}$. Hence, $B_1$ is non-singular if and only if  $(\ast)$ holds. 
Let $r>1$ and $B_2$ be an arbitrary $k\times k$ sub-matrix of $G_1$ with $k-1$ columns from $G_{r,k}$ and containing $v$ as a column. Let 
$$B_2=\begin{bmatrix}
    1&1& \dots &1&0\\
    \alpha_{i_1}& \alpha_{i_2}& \dots&\alpha_{i_{k-1}}&0\\
    \alpha_{i_1}^2& \alpha_{i_2}^2& \dots&\alpha_{i_{k-1}}^2&0\\
    \vdots &\vdots &\dots&\vdots&\vdots\\
    \alpha_{i_1}^{k-r-1}& \alpha_{i_2}^{k-r-1}& \dots&\alpha_{i_{k-1}}^{k-r-1}&0\\
    \alpha_{i_1}^{k-r+1}& \alpha_{i_2}^{k-r+1}& \dots&\alpha_{i_{k-1}}^{k-r+1}&0\\
    \vdots &\vdots &\dots&\vdots&\vdots\\
    \alpha_{i_1}^{k-1}& \alpha_{i_2}^{k-1}& \dots&\alpha_{i_{k-1}}^{k-1}&0\\
    \alpha_{i_1}^{k}& \alpha_{i_2}^{k}& \dots&\alpha_{i_{k-1}}^{k}&1\\
\end{bmatrix}.$$
Then $$\det(B_2)=\det \begin{bmatrix}
    1&1& \dots &1\\
    \alpha_{i_1}& \alpha_{i_2}& \dots&\alpha_{i_{k-1}}\\
    \alpha_{i_1}^2& \alpha_{i_2}^2& \dots&\alpha_{i_{k-1}}^2\\
    \vdots &\vdots &\dots&\vdots\\
    \alpha_{i_1}^{k-r-1}& \alpha_{i_2}^{k-r-1}& \dots&\alpha_{i_{k-1}}^{k-r-1}\\
    \alpha_{i_1}^{k-r+1}& \alpha_{i_2}^{k-r+1}& \dots&\alpha_{i_{k-1}}^{k-r+1}\\
    \vdots &\vdots &\dots&\vdots\\
    \alpha_{i_1}^{k-1}& \alpha_{i_2}^{k-1}& \dots&\alpha_{i_{k-1}}^{k-1} 
    \end{bmatrix}.$$
    Thus, by Proposition \ref{detk-r},   $\det(B_2)\neq 0$ if and only $(\ast\ast)$ holds. If $r=1$, then the determinant of  matrix $B_2$ is the determinant of a Vandermonde matrix and $\det(B_2)\neq 0$.
\end{proof}
\begin{remark}\label{rem1}
For $r=1$, observe that Condition $(\ast\ast)$ holds trivially. Thus the code $C_1$ is MDS if and only if     \begin{equation*}
    \begin{split}
         &\sum_{1\leq j_1\leq k}\alpha_{i_{j_1}}\neq 0\ \ \ \ \ \ \  \ \ \ \ \ \ \ \ \ 
        \end{split}
    \end{equation*}
    for  $ \{i_1,i_2,\dots,i_{k}\}\subset \{1,2,\dots,n\}$. For $r=1$, in \cite{Li2024}, the authors explicitly studied codes of form $C_1$ and provided necessary and sufficient conditions for $C_1$ to be MDS, which is consistent with the above result. 
\end{remark}
\begin{theorem}
  Let $6\le 2k \le n$, $1\leq r\leq k-1$  and $C_1$ be an MDS code generated by $G_1$. Then, $C_1$ is a non-GRS MDS code.
\end{theorem}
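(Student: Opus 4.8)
The plan is to show $C_1$ is MDS by Theorem \ref{c1mds}, then argue it is non-GRS by transferring the non-GRS property of $C_{r,k}$ through the extension, using Lemma \ref{extendnonrs} together with Theorem \ref{c1extend}. First I would check that the hypothesis $6 \le 2k \le n$ and $1 \le r \le k-1$ forces conditions $(\ast)$ and $(\ast\ast)$ of Theorem \ref{c1mds} to hold. The subtle point is that these conditions are not automatic for arbitrary $\Lambda$; however, the statement asserts that $C_1$ \emph{is} an MDS code generated by $G_1$, so $(\ast)$ and $(\ast\ast)$ are part of the standing assumption. In particular, $(\ast)$ holds, and by Theorem \ref{iffcrkmds} (whose range $6 \le 2k \le n$ matches) the code $C_{r,k}$ itself is an $[n,k]$ non-GRS MDS code. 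So the first step reduces to: given that $C_{r,k}$ is non-GRS MDS, conclude $C_1$ is non-GRS.

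The second step is the heart of the argument. By Theorem \ref{c1extend}, $C_1 = \overline{C}_{r,k}(w)$ for some nonzero $w \in \mathbb{F}_q^n$. Lemma \ref{extendnonrs} says: if the extended code $\overline{C}_{r,k}(w)$ were monomial equivalent to a GRS code, then $C_{r,k}$ would be monomial equivalent to a GRS code. But Theorem \ref{iffcrkmds} tells us $C_{r,k}$ is non-GRS (i.e.\ not monomial equivalent to any GRS code), a contradiction. Hence $C_1 = \overline{C}_{r,k}(w)$ is non-GRS. Combined with the MDS property (which holds since $C_1$ is assumed MDS), $C_1$ is a non-GRS MDS code.

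I would write this out as follows. Since $C_1$ is MDS and generated by $G_1$, Theorem \ref{c1mds} gives that condition $(\ast)$ holds for all $k$-subsets of $\{\alpha_1,\dots,\alpha_n\}$. By Theorem \ref{iffcrkmds}, the code $C_{r,k}$ is therefore an $[n,k]$ non-GRS MDS code (the hypotheses $6 \le 2k \le n$, $1 \le r \le k-1$ being exactly those of Theorem \ref{iffcrkmds}). By Theorem \ref{c1extend}, $C_1 = \overline{C}_{r,k}(w)$ for a suitable nonzero $w$. If $C_1$ were monomial equivalent to a GRS code, then by Lemma \ref{extendnonrs} so would be $C_{r,k}$, contradicting that $C_{r,k}$ is non-GRS. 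Therefore $C_1$ is not monomial equivalent to any GRS code, i.e.\ $C_1$ is a non-GRS MDS code.

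The main obstacle, such as it is, is purely bookkeeping: one must make sure the hypotheses line up so that $(\ast)$ from Theorem \ref{c1mds} can be fed into Theorem \ref{iffcrkmds} to get that $C_{r,k}$ is genuinely non-GRS (not merely MDS) — this is what makes Lemma \ref{extendnonrs} applicable. There is no delicate computation here; the real content was already established in the earlier sections, and this theorem is essentially a corollary assembling Theorems \ref{iffcrkmds}, \ref{c1extend}, \ref{c1mds} and Lemma \ref{extendnonrs}. One should double-check only that "non-GRS" throughout means "not monomial (equivalently, not linearly) equivalent to a GRS code," so that the notion used in Lemma \ref{extendnonrs} is the same as in Definition \ref{nongrs}; this is standard since two MDS codes are equivalent iff they are monomially equivalent.
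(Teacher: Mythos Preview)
Your proof is correct and follows essentially the same approach as the paper, which simply cites Lemma \ref{extendnonrs} and Theorem \ref{c1extend}. You have usefully spelled out the implicit step that the paper omits: since $C_1$ is assumed MDS, condition $(\ast)$ holds by Theorem \ref{c1mds}, whence $C_{r,k}$ is non-GRS by Theorem \ref{iffcrkmds}, and then the contrapositive of Lemma \ref{extendnonrs} applies.
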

\begin{proof}
    The proof follows from Lemma \ref{extendnonrs} and Theorem \ref{c1extend}.
\end{proof}

\begin{example}
 There exists a $[7,3]$ non-GRS MDS code over $\mathbb{F}_p$ for every $p>47$. For this,   take $r=2,$ $k=3$ and  $\alpha_{i}=i-1 \pmod{p}$ for $1\leq i\leq 6$. Then Conditions $(\ast)$ and $(\ast\ast)$ are satisfied for all primes $p>47$.   
\end{example}

In the following example, we construct a $[7,3]$ non-GRS MDS code over $\mathbb{F}_{17}$.
\begin{example}
    Let $q=17$, $n=6,$ $k=3$ and $r=2$. Take $\mathbf{\alpha}=(\alpha_1,\alpha_2,\alpha_3,\alpha_4,\alpha_5,\alpha_6)=(0,1,5,3,8,6)$. Then $\mathbf{\alpha}$ satisfies $(\ast)$ and $(\ast\ast)$ in Theorem \ref{c1mds}. Thus the 
    code generated $$G=\begin{bmatrix}
        1&1&1&1&1&1&0\\
\alpha_1^2&\alpha_2^2&\alpha_3^2&\alpha_4^2&\alpha_5^2&\alpha_6^2&0\\
         \alpha_1^3&\alpha_2^3&\alpha_3^3& \alpha_4^3&\alpha_5^3&\alpha_6^3&1\\
    \end{bmatrix}=\begin{bmatrix}
        1&1&1&1&1&1&0\\
        0&1&8&9&13&2&0\\
        0&1&6&10&2&12&1
    \end{bmatrix}$$ is a $[7,3,5]_{17}$ non-GRS MDS code. Verified by MAGMA.
\end{example}

In \cite{Jin2024}, some non-GRS MDS codes are constructed for $r=1$. Extending these codes we get the following corollaries. 
\begin{corollary}
    Let $p$ be an odd prime and $k,n$ be positive integers with $6\leq 2k\leq n\leq \frac{p}{k}+\frac{k+1}{2} $. Then there exists an $[n+1,k]$ non-GRS MDS code over $\mathbb{F}_p$.
\end{corollary}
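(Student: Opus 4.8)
The plan is to combine Theorem \ref{c1mds} with a counting argument about when the two non-vanishing conditions $(\ast)$ and $(\ast\ast)$ can be simultaneously satisfied over $\mathbb{F}_p$, and then invoke the previous theorem (extended MDS codes of $C_{r,k}$ are non-GRS when $6 \le 2k \le n$) to conclude. Concretely, I would fix $\Lambda = \{\alpha_1, \dots, \alpha_n\} \subseteq \mathbb{F}_p$ and translate the two conditions into: for every $k$-subset of $\Lambda$ the $r$-th elementary symmetric function is nonzero, and for every $(k-1)$-subset the $(r-1)$-th elementary symmetric function is nonzero. The idea is to choose the $\alpha_i$ to be (images in $\mathbb{F}_p$ of) consecutive nonnegative integers — say $\alpha_i = i-1$ — or more flexibly a block of small integers, and bound the elementary symmetric functions $\sigma_r$ of any $k$-subset in absolute value as integers before reduction mod $p$.

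The key steps, in order: (1) Recall from Theorem \ref{c1mds} and Lemma \ref{extendnonrs}/the preceding theorem that it suffices to exhibit $n$ distinct elements of $\mathbb{F}_p$ for which $(\ast)$ and $(\ast\ast)$ hold, since then $C_1$ is an $[n+1,k]$ non-GRS MDS code. (2) Take $\alpha_i$ to be the residues of $0, 1, \dots, n-1$. For any $k$-subset $S \subseteq \{0,1,\dots,n-1\}$, the integer $\sigma_r(S) = \sum_{T \subseteq S, |T|=r} \prod_{t \in T} t$ is a nonnegative integer, and it is at most $\binom{k}{r}$ times the product of the $r$ largest elements, which is crudely bounded by $\binom{k}{r}(n-1)^r$; a cleaner bound I would actually use is $\sigma_r(S) \le \sigma_r(\{n-k, n-k+1, \dots, n-1\}) \le \binom{k}{r}\prod_{j=n-k}^{n-1} \max(1,j)$, and similarly for $\sigma_{r-1}$ of $(k-1)$-subsets. (3) Show that under the hypothesis $n \le \frac{p}{k} + \frac{k+1}{2}$ these integer values are strictly less than $p$ (and positive, so they cannot be $\equiv 0 \pmod p$). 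The bound $n \le p/k + (k+1)/2$ rearranges to roughly $k(n - (k+1)/2) \le p$, i.e. $k \cdot (n - 1) - k(k-1)/2 \le p$, which looks exactly like a bound on $\sum_{j=n-k}^{n-1} j = k(n-1) - \binom{k}{2}$ — that is, the sum of the $k$ largest residues — suggesting the intended estimate is on $\sigma_1$ of a $k$-subset, so the case of interest is effectively $r=1$ (consistent with Remark \ref{rem1}, since then $(\ast\ast)$ is vacuous), or one bootstraps the $r=1$ estimate. (4) Conclude that $(\ast)$ and $(\ast\ast)$ hold, hence $C_1$ is $[n+1,k]$ MDS and, by $6 \le 2k \le n$, non-GRS.

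The main obstacle is getting the arithmetic of the bound to match the stated inequality $n \le \frac{p}{k} + \frac{k+1}{2}$ precisely. The clean interpretation is: the $k$ largest of the residues $0,\dots,n-1$ are $n-k, n-k+1, \dots, n-1$, whose sum is $k(n-1) - \binom{k}{2} = k\bigl(n - \tfrac{k+1}{2}\bigr)$; this is $< p$ exactly when $n < \tfrac{p}{k} + \tfrac{k+1}{2}$, which is (up to the boundary) the hypothesis. So for $r=1$ the positivity of $\sigma_1$ on every $k$-subset together with $\sigma_1 < p$ forces $\sigma_1 \not\equiv 0 \pmod p$, giving $(\ast)$; and $(\ast\ast)$ is automatic. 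I would therefore present the corollary as the $r=1$ specialization — picking up the $[n+1,k]$ non-GRS MDS code of Remark \ref{rem1} — and write the proof as: choose $\alpha_i = i - 1$, observe that for any $k$-subset the sum lies strictly between $0$ and $p$, invoke Theorem \ref{c1mds} with Remark \ref{rem1} to get MDS, and invoke the non-GRS theorem for $6 \le 2k \le n$. The only subtlety to double-check is whether the inequality is meant to be strict or non-strict at the endpoint and whether one needs $n \le p$ separately so that the $\alpha_i$ are genuinely distinct in $\mathbb{F}_p$ — but $n \le p/k + (k+1)/2 \le p$ for $k \ge 3$ handles distinctness automatically.
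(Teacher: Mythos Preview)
Your proposal is correct and matches the paper's approach: take $r=1$, choose $\alpha_i=i-1\pmod p$, verify condition $(\ast)$ via the bound $\sigma_1(S)\le k(n-\tfrac{k+1}{2})\le p$ on any $k$-subset (this is exactly the estimate from \cite{Jin2024} that the paper invokes), and conclude via Remark~\ref{rem1} and the non-GRS theorem. The only difference is that the paper simply asserts $(\ast)$ holds by reference to \cite{Jin2024}, whereas you reconstruct the bounding argument explicitly; your flagged endpoint subtlety (whether the maximum sum can equal $p$) is real but is resolved by primality of $p$ together with $3\le k<p$.
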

\begin{proof}
    Take $r=1$ and $\alpha_i=i-1\pmod{p}$. Then Condition $(\ast)$ is satisfied. Thus by Remark \ref{rem1}, the code $C_1$ with parameters $[n+1,k]$ is a non-GRS MDS code over $\mathbb{F}_p$.
\end{proof}
\begin{corollary}
    Let $n$ and $k$ be positive integers with $3\leq k\leq n/2$. Let $C$ be an $[n,n-t,d\geq3]$ code over $\mathbb{F}_q$. If there is no codeword in $C$ with the Hamming weight $k$, then there exists an $[n+1,k]$ non-GRS MDS code over $\mathbb{F}_{q^t}$.
\end{corollary}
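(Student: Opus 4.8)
The plan is to reduce the claim to the $r=1$ instance of the extended construction $C_1$ from Section \ref{firstextension}, applied over the larger field $\mathbb{F}_{q^t}$, by building the $n$ evaluation points out of a parity-check matrix of $C$. Concretely, fix a parity-check matrix $H\in\mathbb{F}_q^{t\times n}$ of $C$ and let $h_1,\dots,h_n\in\mathbb{F}_q^{t}$ be its columns. Because $d(C)\ge 3$, the columns of $H$ are pairwise linearly independent over $\mathbb{F}_q$, hence in particular pairwise distinct (and there are at most $q^t-1$ distinct nonzero vectors in $\mathbb{F}_q^t$, so automatically $n\le q^t$, the only size constraint used by the constructions in Sections \ref{knownGRS} and \ref{firstextension}). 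Now pick any $\mathbb{F}_q$-basis of $\mathbb{F}_{q^t}$; it induces an $\mathbb{F}_q$-linear isomorphism $\varphi\colon\mathbb{F}_q^{t}\to\mathbb{F}_{q^t}$. Set $\alpha_i=\varphi(h_i)$, so that $\Lambda=\{\alpha_1,\dots,\alpha_n\}$ is a set of $n$ distinct elements of $\mathbb{F}_{q^t}$.

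The heart of the argument is the following translation of the hypothesis. For a subset $S\subseteq\{1,\dots,n\}$ with $|S|=k$, linearity of $\varphi$ gives $\sum_{i\in S}\alpha_i=\varphi\bigl(\sum_{i\in S}h_i\bigr)$, and injectivity of $\varphi$ shows this is zero precisely when $\sum_{i\in S}h_i=0$, i.e. precisely when the $\{0,1\}$-indicator vector of $S$ belongs to $C$. Such a vector would be a codeword of Hamming weight $k$, which by hypothesis does not exist; therefore $\sum_{i\in S}\alpha_i\ne 0$ for every $k$-subset $S$. This is exactly Condition $(\ast)$ with $r=1$ for the tuple $(\alpha_1,\dots,\alpha_n)$ over $\mathbb{F}_{q^t}$. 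By Remark \ref{rem1} (the $r=1$ case of Theorem \ref{c1mds}), the associated code $C_1$ is an $[n+1,k]$ MDS code over $\mathbb{F}_{q^t}$. Finally, $3\le k\le n/2$ is the same as $6\le 2k\le n$, so Theorem \ref{iffcrkmds} (with $r=1$) says $C_{1,k}$ is non-GRS; since $C_1=\overline{C}_{1,k}(w)$ for some $w$ by Theorem \ref{c1extend}, the contrapositive of Lemma \ref{extendnonrs} forces $C_1$ to be non-GRS too. This $C_1$ is the desired $[n+1,k]$ non-GRS MDS code over $\mathbb{F}_{q^t}$.

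I do not expect a genuinely hard step here: once the dictionary ``weight-$k$ codeword of $C$ $\leftrightarrow$ a vanishing $k$-fold sum $\sum_{i\in S}\alpha_i$'' is in place, everything else is bookkeeping with parameter ranges and citations of results already established. The only subtlety worth flagging is that the hypothesis on weight-$k$ codewords is used solely through $\{0,1\}$-valued codewords, so it could be relaxed to ``$C$ has no weight-$k$ codeword all of whose nonzero coordinates are equal''; the stated assumption is comfortably sufficient. One should also double-check, as above, that $d(C)\ge 3$ is exactly what guarantees the $\alpha_i$ are distinct, and that $n\le q^t$ comes for free.
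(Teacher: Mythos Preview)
Your proof is correct and follows exactly the approach implicit in the paper: the corollary is stated without proof under the heading ``In \cite{Jin2024}, some non-GRS MDS codes are constructed for $r=1$. Extending these codes we get the following corollaries,'' so the intended argument is precisely to take the $\alpha_i$ from the \cite{Jin2024} construction (columns of a parity-check matrix of $C$, viewed in $\mathbb{F}_{q^t}$ via an $\mathbb{F}_q$-linear identification), verify Condition~$(\ast)$ for $r=1$ via the weight-$k$ hypothesis, and then invoke Remark~\ref{rem1} together with Lemma~\ref{extendnonrs}/Theorem~\ref{c1extend}. You have simply made that citation explicit and self-contained, which is an improvement in exposition rather than a different method.
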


\begin{corollary}
    If there exists an $[n,n-t,d\geq k+1]$ code over $\mathbb{F}_q$, then there exists an $[n+1,k]$ non-GRS MDS code over $\mathbb{F}_{q^t}$.
\end{corollary}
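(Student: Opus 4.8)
The plan is to obtain this statement as an immediate specialization of the preceding Corollary (the one requiring an $[n,n-t,d\ge 3]$ code with no codeword of Hamming weight $k$); so the proof is essentially a check that its hypotheses hold, and there is no genuinely new content to supply. Throughout I keep the standing assumptions on the parameters, namely $3\le k\le n/2$ (equivalently $6\le 2k\le n$), which are exactly what is needed for the non-GRS conclusion.

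First I would take an $[n,n-t,d]_q$ code $C$ with $d\ge k+1$. Since $k\ge 3$ we have $d\ge k+1\ge 4>3$, so in particular $C$ has minimum distance at least $3$. Moreover every nonzero codeword of $C$ has Hamming weight at least $d\ge k+1$, which is strictly larger than $k$; hence $C$ contains no codeword of Hamming weight exactly $k$. These are precisely the two hypotheses of the preceding Corollary, so applying it to $C$ produces an $[n+1,k]$ non-GRS MDS code over $\mathbb{F}_{q^t}$, which is the assertion.

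For completeness I would also recall why the preceding Corollary holds, since that is where the actual work sits. Fix an $\mathbb{F}_q$-basis of $\mathbb{F}_{q^t}$ and read the $n$ columns of a parity-check matrix of $C$ as elements $\alpha_1,\dots,\alpha_n\in\mathbb{F}_{q^t}$. The condition $d\ge 3$ forces these to be pairwise distinct (a coincidence $\alpha_i=\alpha_j$ would yield a weight-$2$ codeword), and the absence of weight-$k$ codewords forces $\alpha_{i_1}+\dots+\alpha_{i_k}\ne 0$ for every $k$-subset $\{i_1,\dots,i_k\}$, i.e.\ condition $(\ast)$ with $r=1$. Then Theorem \ref{c1mds} together with Remark \ref{rem1} shows that the code $C_1$ attached to these $\alpha_i$ and $r=1$ is an $[n+1,k]$ MDS code, and since $6\le 2k\le n$, Lemma \ref{extendnonrs} and Theorem \ref{c1extend} show it is non-GRS. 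The only point requiring any care — and it is hardly an obstacle — is bookkeeping on the ranges: the non-GRS part uses $6\le 2k\le n$, so one must assume (as is implicit in this subsection) that $k$ and $n$ lie in that range; everything else follows automatically from $d\ge k+1$.
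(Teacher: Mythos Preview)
Your proposal is correct and matches the paper's own treatment: the paper states this corollary without proof, immediately after the preceding one, as an evident specialization (since $d\ge k+1$ forces both $d\ge 3$ and the absence of weight-$k$ codewords). Your added paragraph unwinding the preceding corollary via the parity-check columns and Remark~\ref{rem1} is also accurate and consistent with how the paper derives these corollaries from the $r=1$ construction in \cite{Jin2024}.
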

\begin{example}
 It is known that there exists a $[11,4,6]$ ($n=11, t=7,k=5$) code over $\mathbb{F}_3$. Thus, by the above corollary,  there exists a $[12,5]$ non-GRS MDS code over $\mathbb{F}_{3^7}$. 
\end{example}

The following theorem provides necessary and sufficient conditions when an extended code of an MDS code remains MDS.
\begin{theorem}\cite[Theorem 1]{Wu2025}\label{coveradius}
  Let $C$ be an $[n,k]$ MDS code over $\mathbb{F}_q$. Then for any $w\in\mathbb{F}_q^n$, $\overline{C}(w)$  is MDS if and only if the covering radius of $C^{\perp}$ is $k$ and $w$ is a deep hole of the dual
 code  $C^{\perp}$.
\end{theorem}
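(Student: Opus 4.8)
The plan is to translate the MDS property of $\overline{C}(w)$ into a statement about short linear combinations of the columns of a generator matrix of $C$, and then to recognize that statement as a condition on the coset of $w$ modulo $C^{\perp}$.

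First I would fix a generator matrix $G$ of $C$, so that $G_w=(G\mid Gw^T)$ generates the $[n+1,k]$ code $\overline{C}(w)$. By Lemma \ref{mdsnonsingular}, $\overline{C}(w)$ is MDS if and only if every $k\times k$ submatrix of $G_w$ is nonsingular. The submatrices that do not use the last column are $k\times k$ submatrices of $G$, hence nonsingular because $C$ is MDS. A submatrix using the last column has the form $(G_S\mid Gw^T)$ with $S\subseteq\{1,\dots,n\}$, $|S|=k-1$, where $G_S$ is the $k\times(k-1)$ submatrix of $G$ on the columns indexed by $S$; since $C$ is MDS these $k-1$ columns are linearly independent, so $(G_S\mid Gw^T)$ is nonsingular exactly when $Gw^T$ does not lie in the column span of $G_S$. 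As every set of at most $k-1$ columns of $G$ sits inside some such $G_S$, this shows that $\overline{C}(w)$ is MDS if and only if $Gw^T$ cannot be written as a linear combination of at most $k-1$ columns of $G$.

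Next I would pass to $C^{\perp}$. The matrix $G$ is a parity-check matrix of the $[n,n-k]$ code $C^{\perp}$, and for $x\in\mathbb{F}_q^n$ the quantity $d(x,C^{\perp})$ equals the minimum weight of the coset $x+C^{\perp}$, which is the least number of columns of $G$ needed to represent the syndrome $Gx^T$. Hence the condition of the previous paragraph says precisely that $d(w,C^{\perp})\ge k$. On the other hand, because $C$ is MDS any $k$ columns of $G$ form a basis of $\mathbb{F}_q^k$, so every syndrome is a combination of $k$ columns; this gives $d(x,C^{\perp})\le k$ for all $x$, i.e. $\rho(C^{\perp})\le k$ (the redundancy bound). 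Therefore $d(w,C^{\perp})\ge k$ is equivalent to $d(w,C^{\perp})=k$, which in turn holds if and only if $\rho(C^{\perp})=k$ and $w$ attains the covering radius, i.e. $w$ is a deep hole of $C^{\perp}$. This is exactly the claimed equivalence.

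The whole argument is a dictionary among three equivalent descriptions — nonsingularity of the augmented $k\times k$ submatrices of $G_w$, inexpressibility of $Gw^T$ by few columns of $G$, and the coset-leader weight of $w$ in $C^{\perp}$ — so I do not expect a deep obstacle. The one point that genuinely needs care is the uniform bound $\rho(C^{\perp})\le k$, which is what makes the inequality "$\ge k$" collapse to the equality "$=k$" and thereby produces the deep-hole characterization. The remaining bookkeeping is routine: checking that "not in the span of any $(k-1)$-subset of columns" is the same as "not expressible by at most $k-1$ columns" (true by MDS-ness, since all such subsets consist of linearly independent columns), and disposing of trivial boundary cases such as $w=\bm{0}$.
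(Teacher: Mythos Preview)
The paper does not give its own proof of this theorem; it is quoted verbatim from \cite{Wu2025} and used as a black box, so there is no in-paper argument to compare against.

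Your proposal is correct and is essentially the standard argument. The translation of the MDS condition on $G_w$ into ``$Gw^T$ is not in the span of any $(k-1)$-subset of columns of $G$'' is sound, the identification of that span condition with the coset-leader weight $d(w,C^{\perp})$ via syndromes is the usual dictionary, and the use of the redundancy bound $\rho(C^{\perp})\le k$ (forced because any $k$ columns of $G$ are a basis of $\mathbb{F}_q^{\,k}$) is exactly what collapses $d(w,C^{\perp})\ge k$ to the deep-hole statement. The only cosmetic point: in the edge case $w\in C^{\perp}$ (in particular $w=\bm 0$) one has $Gw^T=0$, which lies in the span of the empty set of columns, so $d(w,C^{\perp})=0<k$ and the equivalence still holds; you might state this explicitly rather than leaving it as ``routine''.
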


Thus, by Theorem \ref{coveradius}, we have the following corollary.

\begin{corollary}
    Let $C_1$ be an $[n+1,k]$ MDS code generated by $G_1$. If $C_{r,k}$ is MDS, then the covering radius of $C_{r,k}^{\perp}$ is $k$ with deep whole vector $w$ obtained in Theorem \ref{c1extend}.
\end{corollary}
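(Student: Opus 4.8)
The plan is to combine Theorem \ref{c1extend} with the ``only if'' direction of Theorem \ref{coveradius}. First, by Theorem \ref{c1extend}, the code $C_1$ equals the extended code $\overline{C}_{r,k}(w)$, where $w=(w_1,w_2,\dots,w_n)$ is the explicit nonzero vector produced there: $w_i=\prod_{1\leq j\leq k+1,\ j\neq i}(\alpha_i-\alpha_j)^{-1}$ for $1\leq i\leq k+1$ and $w_i=0$ for $k+2\leq i\leq n$. Thus $C_1$ is realized as an extension of $C_{r,k}$ with respect to this particular $w$.

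Next I would invoke Theorem \ref{coveradius} with $C=C_{r,k}$. This is legitimate precisely because of the standing hypothesis that $C_{r,k}$ is an $[n,k]$ MDS code; without it Theorem \ref{coveradius} would not apply. Since $C_1=\overline{C}_{r,k}(w)$ is assumed to be an $[n+1,k]$ MDS code, the equivalence in Theorem \ref{coveradius} forces both of its right-hand conditions to hold simultaneously: the covering radius of $C_{r,k}^{\perp}$ is equal to $k$, and $w$ is a deep hole of $C_{r,k}^{\perp}$. This is exactly the statement of the corollary.

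There is essentially no obstacle here beyond bookkeeping, since the mathematical content is entirely carried by Theorems \ref{c1extend} and \ref{coveradius}. The only point requiring care is checking that the vector $w$ appearing in the deep-hole conclusion is literally the same vector constructed in Theorem \ref{c1extend} — which it is, because that $w$ is by definition the one realizing $C_1$ as $\overline{C}_{r,k}(w)$. One may add the remark that this yields an explicit family of deep holes of the dual codes $C_{r,k}^{\perp}$: whenever condition $(\ast)$ of Theorem \ref{iffcrkmds} and the combined conditions $(\ast)$, $(\ast\ast)$ of Theorem \ref{c1mds} all hold (so that both $C_{r,k}$ and $C_1$ are MDS), the vector $w$ above is a deep hole of $C_{r,k}^{\perp}$ and $C_{r,k}^{\perp}$ has covering radius $k$.
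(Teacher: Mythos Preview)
Your proposal is correct and follows exactly the paper's approach: the corollary is stated immediately after Theorem \ref{coveradius} as a direct consequence of it, combined with the identification $C_1=\overline{C}_{r,k}(w)$ from Theorem \ref{c1extend}. Your added care in noting that the MDS hypothesis on $C_{r,k}$ is what makes Theorem \ref{coveradius} applicable is a welcome clarification, but there is no substantive difference between your argument and the paper's.
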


\section{Second class of extended non-GRS MDS codes}\label{secondextension}
In this section, we consider another class of codes whose generator matrix is obtained from the matrix $G_{r,k}$ by adding two columns. Analogous to the previous section, we show that these codes are extended codes of the codes considered in the previous section. We determine the parity check matrix for these codes and establish necessary and sufficient conditions for these codes to be non-GRS MDS codes. 
Let $\delta\in \mathbb{F}_q$ and
$$G_2=\begin{bmatrix}
    1&1& \dots &1 &0&0\\
    \alpha_1& \alpha_2& \dots&\alpha_n&0&0\\
    \alpha_1^2& \alpha_2^2& \dots&\alpha_n^2&0&0\\
    \vdots &\vdots &\dots&\vdots&\vdots&\vdots\\
    \alpha_1^{k-r-1}& \alpha_2^{k-r-1}& \dots&\alpha_n^{k-r-1}&0&0\\
    \alpha_1^{k-r+1}& \alpha_2^{k-r+1}&\dots&\alpha_n^{k-r+1}&0&0\\
    \vdots &\vdots &\dots&\vdots&\vdots& \vdots\\
    \alpha_1^{k-1}& \alpha_2^{k-1}& \dots&\alpha_n^{k-1}&0&1\\
    \alpha_1^{k}& \alpha_2^{k}& \dots&\alpha_n^{k}&1&\delta\\
\end{bmatrix}_{k\times (n+2)}$$ 
and $C_2$ be the code generated by $G_2$. The code $C_2$ is also seen as
\begin{equation*}
    \begin{split}
     C_2=&\{ (f(\alpha_1),f(\alpha_2),\dots, f(\alpha_n), f_k, f_{k-j}+\delta f_k)|\ f(x)\in V_{r,k} \},
    \end{split}
\end{equation*}
 where  $j=1$  if  $r>1$, $j=2$ if  $r=1$.
\begin{theorem}\label{parityg2}
Let $2\leq r\leq k-1$ and 
$$H_2=\begin{bmatrix}
    u_1&u_2&\dots&u_n&0&0\\
u_1\alpha_1&u_2\alpha_2&\dots&u_n\alpha_n&0&0\\
u_1\alpha_1^2&u_2\alpha_2^2&\dots&u_n\alpha_n^2&0\\
\vdots&\vdots&\vdots&\vdots&\vdots&\vdots\\
u_1\alpha_1^{n-k-2}&u_2\alpha_2^{n-k-2}&\dots&u_n\alpha_n^{n-k-2}&0&0\\
u_1\Lambda_{r,1}&u_2\Lambda_{r,2}&\dots&u_n\Lambda_{r,n}&0&0\\
u_1\alpha_1^{n-k-1}& u_2\alpha_2^{n-k-1}&\dots&u_n\alpha_n^{n-k-1}&-1&0\\
u_1\alpha_1^{n-k}& u_2\alpha_2^{n-k}&\dots&u_n\alpha_n^{n-k}& b &-1
\end{bmatrix},$$ where $b=\delta-\sum_{i=1}^n\alpha_i$.
Then $H_2$ is a parity check matrix for the code $C_2$ generated by $G_2$.
\end{theorem}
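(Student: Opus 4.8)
The plan is to mimic the proof of Theorem \ref{parityg1}, checking that $H_2$ has full rank $n-k+2$ and that every one of its rows is orthogonal to every row of $G_2$. Since $H_2$ is $(n-k+2)\times(n+2)$ while $G_2$ is $k\times(n+2)$, and $(n-k+2)+k=n+2$, this suffices. The first $n-k+1$ rows of $H_2$ agree with the first $n-k+1$ rows of $H_1$ on their first $n$ entries and have two trailing zeros, while the first $n-k+1$ rows of $G_2$ agree with the corresponding rows of $G_1$ on their first $n$ entries and have a trailing zero (plus one more zero). The only genuinely new columns of $G_2$ are the last two, which have nonzero entries only in rows $k-1$ and $k$; the only new row of $H_2$ is the last one, row $n-k+2$. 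So the bulk of the orthogonality relations $g_l h_m^{T}=0$ reduce to those already established in Theorems \ref{paritygrk} and \ref{parityg1}, together with the trivial observation that the new columns contribute nothing when $l\notin\{k-1,k\}$ or $m\neq n-k+2$.

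The substantive computations are: (i) the inner product of the new row $h_{n-k+2}=(u_1\alpha_1^{n-k},\dots,u_n\alpha_n^{n-k},\,b,\,-1)$ with each row $g_l$ of $G_2$; and (ii) the inner products of the earlier rows $h_1,\dots,h_{n-k+1}$ with the two rows $g_{k-1}$ and $g_k$ that carry the new column entries. For (i), when $1\le l\le k-r$ one needs $\sum_i u_i\alpha_i^{(l-1)+(n-k)}=0$, which holds by Lemma \ref{paritylemma} since $(l-1)+(n-k)\le n-r-1\le n-2$; when $k-r+1\le l\le k-2$ one needs $\sum_i u_i\alpha_i^{l+(n-k)}=0$, true since $l+(n-k)\le n-2$; for $l=k-1$ the new column in position $n+2$ contributes $-1$ and one needs $\sum_i u_i\alpha_i^{n-1}-1=S_0-1=0$ by Lemma \ref{paritylemma}; for $l=k$ the two new columns contribute $b-\delta$, and $\sum_i u_i\alpha_i^{n}=S_1=\sum_i\alpha_i$ by Lemma \ref{paritylemma}, so the inner product is $\sum_i\alpha_i+b-\delta=0$ precisely because $b=\delta-\sum_i\alpha_i$ — this is the one spot where the choice of $b$ is forced. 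For (ii), rows $g_{k-1}$ and $g_k$ have new-column entries $(0,1)$ and $(1,\delta)$ respectively, while rows $h_1,\dots,h_{n-k+1}$ have $(0,0)$ in the last two positions, so these products are unchanged from Theorems \ref{paritygrk}–\ref{parityg1} and already vanish.

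Finally, linear independence of the rows of $H_2$: the first $n-k-1$ rows are linearly independent (they form a truncated Vandermonde-type block), row $h_{n-k}=(u_1\Lambda_{r,1},\dots)$ lies outside their span by the argument in Theorem \ref{paritygrk} (pairing with $D=(\alpha_1^{k-r},\dots,\alpha_n^{k-r},0,0)$ gives $1$), row $h_{n-k+1}$ is independent because of its $-1$ in column $n+1$ where all previous rows vanish, and row $h_{n-k+2}$ is independent because of its $-1$ in column $n+2$ where all previous rows vanish. Hence $H_2$ has rank $n-k+2 = (n+2)-k$, and being orthogonal to the full-rank generator matrix $G_2$, it is a parity check matrix for $C_2$. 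The only real obstacle is bookkeeping: keeping track of which exponent shifts arise for the ranges $1\le l\le k-r$ versus $k-r+1\le l\le k$, and verifying the boundary cases $l=k-1,k$ where the added columns intervene; but each of these reduces to a single application of Lemma \ref{paritylemma} or Eq. \eqref{eqA}.
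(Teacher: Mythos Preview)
Your proof is correct and follows the paper's approach exactly: reduce orthogonality of the first $n-k+1$ rows to Theorem \ref{parityg1}, then handle the new row $h_{n-k+2}$ case-by-case via Lemma \ref{paritylemma}, and finish with linear independence using the staircase of $-1$'s in the last two columns. One bookkeeping slip: you assert that rows $h_1,\dots,h_{n-k+1}$ have $(0,0)$ in their last two positions, but $h_{n-k+1}$ actually has $(-1,0)$ there; the reduction to Theorem \ref{parityg1} still works, since the first $n-k+1$ rows of $H_2$ are precisely the rows of $H_1$ with a single $0$ appended in position $n+2$, and that final coordinate contributes nothing to the inner product with any row of $G_2$.
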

\begin{proof}
It is easy to see that the rows of $H_2$ are linearly independent, and the first $n-k+1$ rows of $H_2$ are orthogonal to each row of $G_2$ from Theorem \ref{parityg1}. Note that for $m=n-k+2$, we have 
\begin{align*}
    g_l h_m^T=\begin{cases}
    \sum_{i=1}^nu_i\alpha^{(l-1)+(n-k)}& \text{if } 1\leq l\leq k-r  \\
     \sum_{i=1}^nu_i\alpha^{l+(n-k)}& \text{if } k-r+1\leq l\leq k-2  \\
     \sum_{i=1}^nu_i\alpha^{n-1}-1& \text{if }  l= k-1\\
     \sum_{i=1}^nu_i\alpha^{n}+b-\delta& \text{if }  l= k.
    \end{cases}.
\end{align*}
By Lemma \ref{paritylemma}, $g_lh_m^T=0$.
This completes the proof.
\end{proof}

For the case $r=1$, in \cite{Zhi2025}, the authors determined the parity check matrix for the code $C_2$ generated by $G_2$. In the following proposition, we determine the parity check matrix for case $r=1$ analogous to the case $2\leq r\leq k-1$ in the above theorem. The proof follows similar to that of Theorem \ref{parityg2} using Lemma \ref{paritylemma}. Thus, we omit the proof.

\begin{proposition}\label{r1parity}
  Let $r=1$ and 
$$H_2=\begin{bmatrix}
    u_1&u_2&\dots&u_n&0&0\\
u_1\alpha_1&u_2\alpha_2&\dots&u_n\alpha_n&0&0\\
u_1\alpha_1^2&u_2\alpha_2^2&\dots&u_n\alpha_n^2&0\\
\vdots&\vdots&\vdots&\vdots&\vdots&\vdots\\
u_1\alpha_1^{n-k-2}&u_2\alpha_2^{n-k-2}&\dots&u_n\alpha_n^{n-k-2}&0&0\\
u_1\Lambda_{1,1}&u_2\Lambda_{1,2}&\dots&u_n\Lambda_{1,n}&0&0\\
u_1\alpha_1^{n-k-1}& u_2\alpha_2^{n-k-1}&\dots&u_n\alpha_n^{n-k-1}&-1&0\\
u_1\alpha_1^{n-k+1}& u_2\alpha_2^{n-k+1}&\dots&u_n\alpha_n^{n-k+1}& \delta-S_2 &-1
\end{bmatrix},$$
where $S_2:=S_2(\alpha_1,\alpha_2,\dots,\alpha_n)=(\sum_{i=1}^n\alpha_i)^2-\sum_{1\leq i<j\leq n}\alpha_i\alpha_j$. 
Then $H_2$ is a parity check matrix for the code $C_2$ generated by $G_2$.  
\end{proposition}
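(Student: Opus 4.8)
The plan is to mirror the proof of Theorem \ref{parityg2} with the bookkeeping adjusted to the $r=1$ situation, where the ``gap'' in the exponent list of $G_2$ is at position $k-1$ rather than position $k-r=k-1$... more precisely, recall that for $r=1$ the defining set $V_{r,k}$ omits the monomial $x^{k-r}=x^{k-1}$, and the last two columns of $G_2$ read $(0,\dots,0,1,\delta)^T$ where the $1$ sits in the row of exponent $k-1$ and $\delta$ sits in the row of exponent $k$. First I would record the rows of $G_2$ explicitly: the row exponents are $0,1,\dots,k-2,k$, so $g_l$ has entries $\alpha_i^{l-1}$ for $1\le l\le k-1$ and $g_k$ has entries $\alpha_i^{k}$ in the first $n$ coordinates; the last two coordinates of $g_l$ are $0$ for $l\le k-2$, then $(0,1)$ for $l=k-1$ wait — for $r=1$ the $1$ is in the $(k-1)$-st displayed row? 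Let me instead just say: the last two columns of $G_2$ are $(0,\dots,0,1)^T$ and $(0,\dots,0,1,\delta)^T$ where the lone $1$ in the first of these is in the row with exponent $k-1$. I would then note that, as in Theorem \ref{parityg2}, the first $n-k+1$ rows of the claimed $H_2$ are exactly the rows of the parity-check matrix $H_1$ of Theorem \ref{parityg1} (the $r=1$ case), padded with a final zero; by Lemma \ref{extendnonrs}-style reasoning — actually just directly by Theorem \ref{parityg1} — these are orthogonal to all rows of $G_1$, hence to the truncation of all rows of $G_2$ to the first $n+1$ coordinates, and since their last coordinate is $0$ they are orthogonal to all of $G_2$. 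They are also clearly linearly independent and independent of the last row of $H_2$ because of the staircase pattern of the $-1$'s in the last two columns.

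The substantive computation is to check that the final row $h_{n-k+2}=(u_1\alpha_1^{n-k+1},\dots,u_n\alpha_n^{n-k+1},\ \delta-S_2,\ -1)$ is orthogonal to every row $g_l$ of $G_2$. For $1\le l\le k-2$ the pairing is $\sum_{i=1}^n u_i\alpha_i^{(l-1)+(n-k+1)}=\sum_i u_i\alpha_i^{l+n-k}$, and since $l+n-k\le n-2$ in this range, Lemma \ref{paritylemma} gives $0$. For $l=k-1$ the pairing is $\sum_i u_i\alpha_i^{(k-2)+(n-k+1)}+1\cdot(-1)=\sum_i u_i\alpha_i^{n-1}-1$, which is $S_0-1=0$ by Lemma \ref{paritylemma}. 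For $l=k$ the pairing is $\sum_i u_i\alpha_i^{k+(n-k+1)}+\delta\cdot(-1)=\sum_i u_i\alpha_i^{n+1}-\delta=S_2(\alpha_1,\dots,\alpha_n)-\delta$ by Lemma \ref{paritylemma} (since $h=n+1\ge n-1$ gives $S_{(n+1)-(n-1)}=S_2$); combined with the last coordinate contribution $(\delta-S_2)\cdot 0$ — no wait, the last coordinate of $g_k$ is $\delta$, not $0$, and it pairs against the $(n+1)$-st entry $-1$ of $h_{n-k+2}$, which I already used; the $(n+2)$-nd... I need to be careful that $g_k$'s last coordinate in $G_2$ equals $\delta$ and its $(n+1)$-st equals... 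For $r=1$ the matrix $G_2$ has last two columns $(0,\dots,0,1)^T$ and $(0,\dots,0,1,\delta)^T$, so $g_k=(\alpha_1^k,\dots,\alpha_n^k,1,\delta)$; then $g_k h_{n-k+2}^T=\sum_i u_i\alpha_i^{n+1}+1\cdot(\delta-S_2)+\delta\cdot(-1)=S_2+\delta-S_2-\delta=0$. That is the one place where the precise shape of the last-column entries and the formula $S_2=(\sum\alpha_i)^2-\sum_{i<j}\alpha_i\alpha_j$ enter, so I would present that identity explicitly and note it agrees with the general expression for $S_2$ in $n$ variables.

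Finally I would observe that the claimed $H_2$ has $n-k+2$ rows and $C_2$ has length $n+2$ and dimension $k$, so $(n+2)-k=n-k+2$ rows is the correct corank; together with linear independence of the rows this forces $H_2$ to be a genuine parity-check matrix rather than merely a matrix of parity checks. Since the excerpt explicitly says the proof is ``similar to that of Theorem \ref{parityg2}'' and is omitted, the expected write-up is short: state that one repeats the argument of Theorem \ref{parityg2} with $\Lambda_{r,i}$ replaced by $\Lambda_{1,i}$ and the last-row exponent $n-k$ replaced by $n-k+1$, and verify the three boundary cases $l\in\{k-2,k-1,k\}$ above using Lemma \ref{paritylemma}. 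The only mild obstacle is clerical: getting the exponent offsets right because, unlike the $r\ge 2$ case, here the ``missing'' monomial and the extension monomial $x^{k-j}$ with $j=2$ interact, which shifts the relevant row of $H_2$ from $\alpha_i^{n-k}$ to $\alpha_i^{n-k+1}$ and replaces the constant $b=\delta-\sum\alpha_i$ by $\delta-S_2$. There is no conceptual difficulty beyond correctly tracking these indices.
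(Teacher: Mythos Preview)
Your proposal is correct and follows precisely the approach the paper indicates (the paper omits the proof, saying only that it is similar to Theorem~\ref{parityg2} via Lemma~\ref{paritylemma}); your computations for the boundary cases $l=k-1$ and $l=k$, yielding $S_0-1=0$ and $S_2+(\delta-S_2)-\delta=0$ respectively, are exactly the adjustments needed for $r=1$. The stream-of-consciousness hesitations about which row of $G_2$ carries the extra $1$ should be cleaned up in a final write-up, but all final computations and the overall structure are right.
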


\begin{remark}
   Observe that the parity check matrix $H_{n-k+2}$ obtained in \cite{Zhi2025} and  the above parity check matrix $H_2$ in the Proposition \ref{r1parity}  generate the same code. Let $L_1,L_2$ and $L_3$ be the last, second last and third last rows of $H_2$, respectively. 
     Note that all rows are same in the both matrices except  the rows $L_3$ and second last row of $H_{n-k+2}$.      
    From Eq. \ref{lambdari}, we have  $\Lambda_{1,i}=\sigma_0\alpha_i^{n-k}-\sigma_1\alpha_i^{n-k-1}$. Apply a row operation on $L_3$, $$L_3\to L_3+\sigma_1L_2.$$  The updated row $L_3$ is
$$[u_1\alpha_1^{n-k}, u_2\alpha_2^{n-k},\dots, u_n\alpha_n^{n-k},-\sigma_1,0 ]$$
     which is same as the second last row of $H_{n-k+2}$. Thus, the matrix $H_2$ and $H_{n-k+2}$ generate the same code.
\end{remark}

\begin{theorem}\label{c2extend}
The code $C_2$ generated by $G_2$ is an extended code of $C_1$.
\end{theorem}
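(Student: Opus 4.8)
The plan is to mimic the proof of Theorem \ref{c1extend}: I will exhibit $C_2$ as $\overline{C_1}(w)$ by producing a nonzero $w\in\mathbb{F}_q^{n+1}$ for which the last column of $G_2$ equals $G_1 w^{T}$; then $G_2=(G_1\mid G_1w^{T})$, and since $G_1$ is a generator matrix of $C_1$ the identity $C_2=\overline{C_1}(w)$ is immediate from the definition of the extended code.

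First I would record how $G_2$ sits over $G_1$. Deleting the last column of $G_2$ returns exactly $G_1$: the penultimate column of $G_2$, namely $(0,\dots,0,1)^{T}$ with the $1$ in the row indexed by exponent $k$, is precisely the last column of $G_1$, and the remaining columns are the $\alpha_i$-columns of $G_{r,k}$. The last column of $G_2$ is $v=(0,\dots,0,1,\delta)^{T}\in\mathbb{F}_q^{k}$, where the $1$ lies in the penultimate row of $G_2$; a quick case check shows this penultimate row is the $\alpha^{k-1}$-row when $r>1$ and the $\alpha^{k-2}$-row when $r=1$ (because the exponent $k-r$ is the one omitted), i.e. in both cases the $\alpha^{k-j}$-row with $j$ as in the definition of $C_2$. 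Writing $G_1=(G_{r,k}\mid e)$ with $e=(0,\dots,0,1)^{T}$, it therefore suffices to find $(w_1,\dots,w_n)\in\mathbb{F}_q^{n}$ with $G_{r,k}(w_1,\dots,w_n)^{T}$ equal to the vector having a $1$ in the penultimate row and $0$ elsewhere, for then $w=(w_1,\dots,w_n,\delta)$ satisfies $G_1w^{T}=G_{r,k}(w_1,\dots,w_n)^{T}+\delta e=v$.

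Such a $(w_1,\dots,w_n)$ exists because $G_{r,k}$ has rank $k$: it generates the $[n,k]$ code $C_{r,k}$ (equivalently, a polynomial of degree $\le k$ supported on the exponent set of $G_{r,k}$ that vanishes at the $n>k$ distinct points $\alpha_i$ is zero), so $x\mapsto G_{r,k}x^{T}$ maps $\mathbb{F}_q^{n}$ onto $\mathbb{F}_q^{k}$. If an explicit $w$ is desired, one may instead choose a $k$-subset $S\subseteq\{\alpha_1,\dots,\alpha_n\}$ with $\sigma_r(S)\neq 0$ — one exists since rank $k$ forces some $k\times k$ submatrix of $G_{r,k}$ to be nonsingular, and by Proposition \ref{detk-r} its determinant is a nonzero Vandermonde factor times $\sigma_r(S)$ — and solve the corresponding $k\times k$ system over $S$ by Cramer's rule, taking $w_i=0$ off $S$. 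In either case $(w_1,\dots,w_n)\neq 0$, hence $w\neq 0$, and $G_2=(G_1\mid G_1w^{T})$ gives $C_2=\overline{C_1}(w)$.

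There is no genuine obstacle; the only care needed is the bookkeeping — verifying that deleting the last column of $G_2$ really returns $G_1$ unchanged, and tracking which exponent the penultimate row of $G_2$ corresponds to in the two cases $r=1$ and $r>1$. Equivalently, one could argue directly on codewords: for $c=(f(\alpha_1),\dots,f(\alpha_n),f_k)\in C_1$ the extra coordinate of $C_2$ is the value $f_{k-j}+\delta f_k$ of a linear functional on $V_{r,k}$, and since $f\mapsto c$ is a linear isomorphism $V_{r,k}\to C_1$, this functional is of the form $c\mapsto\langle w,c\rangle$ for some $w\in\mathbb{F}_q^{n+1}$, which is nonzero because $f_{k-j}$ is a free coefficient of $V_{r,k}$; this again yields $C_2=\overline{C_1}(w)$.
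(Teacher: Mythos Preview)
Your proof is correct and follows the same overall strategy as the paper: exhibit $G_2=(G_1\mid G_1w^{T})$ for some nonzero $w\in\mathbb{F}_q^{n+1}$. The only difference is in how $w$ is built. The paper mimics Theorem~\ref{c1extend} literally: it solves a Vandermonde system to get $v\in\mathbb{F}_q^{n}$ with $G_{r-1,k-1}v^{T}=(0,\dots,0,1)^{T}$ (this controls only the first $k-1$ rows of $G_{r,k}v^{T}$), and then sets $w_{n+1}=\delta-\sum_i v_i\alpha_i^{k}$ to absorb whatever the $\alpha^{k}$-row produces. You instead invoke the full row rank of $G_{r,k}$ to hit $(0,\dots,0,1,0)^{T}$ directly, which lets you take $w_{n+1}=\delta$. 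Your route is slightly cleaner and avoids the explicit Cramer computation; the paper's route is more constructive and yields a concrete formula for $w$ in terms of the $\alpha_i$. Your final linear-functional argument is also a valid (and arguably the most conceptual) alternative.
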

\begin{proof}
Similar to the proof of Theorem \ref{c1extend}, we have a non-zero $v\in \mathbb{F}_q^n$ such that $G_{r-1,k-1}v^T=(0,0,\dots,1)$, that is, 
$$\begin{bmatrix}
    1&1& \dots &1\\
    \alpha_1& \alpha_2& \dots&\alpha_n\\
    \alpha_1^2& \alpha_2^2& \dots&\alpha_n^2\\
    \vdots &\vdots &\dots&\vdots\\
    \alpha_1^{k-r-1}& \alpha_2^{k-r-1}& \dots&\alpha_n^{k-r-1}\\
    \alpha_1^{k-r+1}& \alpha_2^{k-r+1}& \dots&\alpha_n^{k-r+1}\\
    \vdots &\vdots &\dots&\vdots\\
    \alpha_1^{k-1}& \alpha_2^{k-1}& \dots&\alpha_n^{k-1}
\end{bmatrix}_{(k-1)\times n}\begin{bmatrix}
    v_1\\
    v_2\\
    v_3\\
    \vdots\\
    v_k\\
      v_{k+1}\\
      \vdots\\
      v_n
\end{bmatrix}=\begin{bmatrix}
    0\\
    0\\
    0\\
    \vdots\\
    0 \\
    1
\end{bmatrix}.$$ Take $w=(w_1,\dots,w_n,w_{n+1})\in \mathbb{F}_q^{n+1}$, where $w_i=v_i$ for $1\leq i\leq n$ and $w_{n+1}=\delta-\sum_{i=1}^n \alpha_i^k v_i$. Then $w\neq \bm{0}$ (as $v\neq \bm{0}$) and 
$$\begin{bmatrix}
    1&1& \dots &1&0\\
    \alpha_1& \alpha_2& \dots&\alpha_n&0\\
    \alpha_1^2& \alpha_2^2& \dots&\alpha_n^2&0\\
    \vdots &\vdots &\dots&\vdots&\vdots\\
    \alpha_1^{k-r-1}& \alpha_2^{k-r-1}& \dots&\alpha_n^{k-r-1}&0\\
    \alpha_1^{k-r+1}& \alpha_2^{k-r+1}& \dots&\alpha_n^{k-r+1}&0\\
    \vdots &\vdots &\dots&\vdots&\vdots\\
    \alpha_1^{k-1}& \alpha_2^{k-1}& \dots&\alpha_n^{k-1}&0\\
    \alpha_1^{k}& \alpha_2^{k}& \dots&\alpha_n^{k}&1\\
\end{bmatrix}_{k\times (n+1)}\begin{bmatrix}
    w_1\\
    w_2\\
    w_3\\
    \vdots\\
    w_k\\
      w_{k+1}\\
      \vdots\\
      w_n\\
      w_{n+1}
\end{bmatrix}=\begin{bmatrix}
    0\\
    0\\
    0\\
    \vdots\\
    0 \\
    1\\
    \delta
\end{bmatrix}.$$
Hence $(G_1\ G_1w^T)=G_2$. This completes the proof.
\end{proof}

\begin{theorem}\label{c2mds}
Let $C_2$ be the code generated by $G_2$ and $2\leq r\leq k-1$. Then    $C_2$ is MDS if and only if
\begin{equation*}
    \begin{split}
         &\sum_{1\leq j_1<\dots <j_r\leq k}\alpha_{i_{j_1}}\alpha_{i_{j_2}}\dots\alpha_{i_{j_r}}\neq 0,\ \ \ \ \ \ \  \ \ \ \ \ \ \ \ \ (\ast)\\
          &\sum_{1\leq j_1<\dots <j_{r-1}\leq k-1}\alpha_{i_{j_1}}\alpha_{i_{j_2}}\dots\alpha_{i_{j_{r-1}}}\neq 0,\ \ \ \ \ \ \ \   (\ast\ast)\\
          &\sum_{1\leq j_1<\dots <j_{r-2}\leq k-2}\alpha_{i_{j_1}}\alpha_{i_{j_2}}\dots\alpha_{i_{j_{r-2}}}\neq 0,\ \ \ \ \   \ \ \ (\ast\ast\ast)
        \end{split}
    \end{equation*}
    \begin{align*}
    \begin{split}
      &\delta \sum_{1\leq j_1<\dots <j_{r-1}\leq k-1}\alpha_{i_{j_1}}\alpha_{i_{j_2}}\dots\alpha_{i_{j_{r-1}}} - \left (\sum_{s=1}^{k-1}\alpha_{i_s}\right)\left(\sum_{1\leq j_1<\dots <j_{r-1}\leq k-1}\alpha_{i_{j_1}}\alpha_{i_{j_2}}\dots\alpha_{i_{j_{r-1}}}\right) \\
      &+\sum_{1\leq j_1<\dots <j_{r}\leq k-1}\alpha_{i_{j_1}}\alpha_{i_{j_2}}\dots\alpha_{i_{j_{r}}}\neq 0\ \ \ \ \ \ \ \ \ \ \ \ \ \ \ \ \ \  \ \ \   (\#) 
    \end{split}
\end{align*}
    for all pairwise distinct elements $\alpha_{i_j}$ with $\{i_1,i_2,\dots,i_{k-2}\}$,
    $\{i_1,i_2,\dots,i_{k-1}\}$,  
    $\{i_1,i_2,\dots,i_{k}\}\subseteq \{1,2,\dots,n\}$.
\end{theorem}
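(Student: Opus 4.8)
\emph{Proof proposal.} The plan is to use Lemma~\ref{mdsnonsingular}: the code $C_2$ is MDS if and only if every $k\times k$ submatrix of $G_2$ is nonsingular. Write $v_1=(0,\dots,0,1)^T$ and $v_2=(0,\dots,0,1,\delta)^T$ for the last two columns of $G_2$, so that $v_1$ has its only nonzero entry (a $1$) in row $k$, while $v_2$ has a $1$ in row $k-1$ and $\delta$ in row $k$. A $k\times k$ submatrix $B$ of $G_2$ uses none, one, or both of $v_1,v_2$, and I would treat these cases in turn; in every case each square Vandermonde determinant $\prod_{1\le s<t\le m}(\alpha_{i_t}-\alpha_{i_s})$ that arises is nonzero since the $\alpha_i$ are distinct. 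If $B$ uses neither $v_1$ nor $v_2$, then $B$ is a $k\times k$ submatrix of $G_{r,k}$, so by the computation in the proof of Theorem~\ref{iffcrkmds} (an application of Proposition~\ref{detk-r}) all such $B$ are nonsingular exactly when $(\ast)$ holds.

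Next suppose $B$ uses exactly one of $v_1,v_2$, and let $\{\alpha_{i_1},\dots,\alpha_{i_{k-1}}\}$ index the $k-1$ columns of $B$ coming from $G_{r,k}$; abbreviate $\sigma'_t=\sigma_t(\alpha_{i_1},\dots,\alpha_{i_{k-1}})$ and $P=\prod_{1\le s<t\le k-1}(\alpha_{i_t}-\alpha_{i_s})$. If $B=[\,C\ v_1\,]$, cofactor expansion along $v_1$ leaves the $(k-1)\times(k-1)$ matrix whose row exponents form the set $\{0,1,\dots,k-1\}\setminus\{k-r\}$ in the variables $\alpha_{i_1},\dots,\alpha_{i_{k-1}}$; by Proposition~\ref{detk-r} (with $n\mapsto k-1$, $r\mapsto r-1$) its determinant equals $P\,\sigma'_{r-1}$, so all such $B$ are nonsingular precisely when $(\ast\ast)$ holds. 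If $B=[\,C\ v_2\,]$, cofactor expansion along the last column gives $\det B=-M_{k-1}+\delta M_k$, where $M_k$ is the determinant just computed (row‑exponent set $\{0,\dots,k-1\}\setminus\{k-r\}$, value $P\,\sigma'_{r-1}$) and $M_{k-1}$ has row‑exponent set $\{0,\dots,k-2\}\setminus\{k-r\}\cup\{k\}$. To evaluate $M_{k-1}$, observe that each $\alpha_{i_j}$ is a root of $p(x)=\prod_{t=1}^{k-1}(x-\alpha_{i_t})$, hence its bottom row $(\alpha_{i_1}^k,\dots,\alpha_{i_{k-1}}^k)$ equals $(\rho(\alpha_{i_1}),\dots,\rho(\alpha_{i_{k-1}}))$ with $\rho(x)=x^k\bmod p(x)$ of degree at most $k-2$. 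Expanding $M_{k-1}$ by multilinearity in this row, every monomial $x^t$ with $t\in\{0,\dots,k-2\}\setminus\{k-r\}$ gives a repeated row and contributes $0$, so only the $x^{k-r}$ term survives; dividing $x^k$ by $p(x)$ shows that the coefficient of $x^{k-r}$ in $\rho(x)$ equals $(-1)^r(\sigma'_1\sigma'_{r-1}-\sigma'_r)$, and combining this with the sign $(-1)^{r-2}$ needed to reorder the rows by increasing exponent gives $M_{k-1}=(\sigma'_1\sigma'_{r-1}-\sigma'_r)\,P$. Since $\sigma'_1=\sum_{s=1}^{k-1}\alpha_{i_s}$, we get $\det B=P\,(\delta\sigma'_{r-1}-\sigma'_1\sigma'_{r-1}+\sigma'_r)$, which is $P$ times the left‑hand side of $(\#)$, so all such $B$ are nonsingular precisely when $(\#)$ holds.

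Finally, if $B=[\,C\ v_1\ v_2\,]$ uses both extra columns, with $C$ supplying $k-2$ columns indexed by $\{\alpha_{i_1},\dots,\alpha_{i_{k-2}}\}$, the column operation $v_2\mapsto v_2-\delta v_1$ turns $v_2$ into $(0,\dots,0,1,0)^T$, and two successive cofactor expansions (first along $v_1$, then along the new $v_2$) reduce $\det B$ up to sign to the $(k-2)\times(k-2)$ determinant with row‑exponent set $\{0,\dots,k-2\}\setminus\{k-r\}$ in the variables $\alpha_{i_1},\dots,\alpha_{i_{k-2}}$; by Proposition~\ref{detk-r} (with $n\mapsto k-2$, $r\mapsto r-2$, reading $\sigma_0=1$ if $r=2$) this equals the Vandermonde times $\sigma_{r-2}(\alpha_{i_1},\dots,\alpha_{i_{k-2}})$, so all such $B$ are nonsingular precisely when $(\ast\ast\ast)$ holds. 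Assembling the four cases proves the theorem. The only step that is not an immediate reduction to Proposition~\ref{detk-r} is the evaluation of the generalized Vandermonde $M_{k-1}$ in the single‑$v_2$ case, and I expect that (short) computation to be the main obstacle; alternatively one may recognize $M_{k-1}$ as the Schur polynomial $s_{(2,1^{r-2})}(\alpha_{i_1},\dots,\alpha_{i_{k-1}})$ times the Vandermonde and invoke the dual Jacobi--Trudi identity $s_{(2,1^{r-2})}=\sigma_1\sigma_{r-1}-\sigma_r$.
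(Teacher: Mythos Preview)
Your proof is correct and follows the same four--case breakdown as the paper: submatrices using neither, only $v_1$, only $v_2$, or both extra columns, with Lemma~\ref{mdsnonsingular} and Proposition~\ref{detk-r} doing the work in cases 1, 2 and 4 exactly as in the paper's argument.

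The only real difference is in the $v_2$--only case. The paper handles it by introducing an auxiliary matrix $B(y)$ with last column $(1,y,\dots,y^{k-r-1},y^{k-r+1},\dots,y^{k-1},y^k)^T$, computes $\det B(y)$ in closed form via Proposition~\ref{detk-r} (treating $y$ as $\alpha_{i_k}$), expands the resulting product $\prod_{s=1}^{k-1}(y-\alpha_{i_s})\cdot\bigl(\sigma'_r+y\,\sigma'_{r-1}\bigr)$ in $y$, and then simply reads off the coefficients $A_k$ and $A_{k-1}$. You instead do the cofactor expansion first and evaluate the harder minor $M_{k-1}$ by reducing $x^k$ modulo $\prod_{t=1}^{k-1}(x-\alpha_{i_t})$ (equivalently, via the dual Jacobi--Trudi identity for $s_{(2,1^{r-2})}$). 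Both routes are short and yield the same expression $P\bigl(\delta\sigma'_{r-1}-\sigma'_1\sigma'_{r-1}+\sigma'_r\bigr)$; the paper's $B(y)$ device avoids computing any coefficient of $\rho(x)$ by hand, while your approach has the advantage of giving $M_{k-1}$ directly as a Schur polynomial, which is a pleasant structural observation.
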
 
\begin{proof}
 Let $u=(0,0,\dots,1)^T$ and $v=(0,0,\dots,1,\delta)^T$. Then, we have the following possible $k\times k$ sub-matrices of $G_2$.  \\
 1. Let $B_1$ be a $k\times k$ sub-matrix which does not contain $u$ and $v$. Then $B_1$ is a sub-matrix of $G_{r,k}$. Hence, $B_1$ is non-singular if and only if $(\ast)$ holds.\\
 2. Let $B_2$ be a $k\times k$ sub-matrix which  contain $u$ and $k-1$ columns of $G_{r,k}$. Then $B_2$ is a sub-matrix of $G_1$. Hence, $B_2$ is non-singular if and only if $(\ast\ast)$ holds.\\
 3.  Let $B_3$ be a $k\times k$ sub-matrix which  contain $v$ and $k-1$ columns of $G_{r,k}$. Assume that 
 $$B_3=\begin{bmatrix}
    1&1& \dots &1&0\\
    \alpha_{i_1}& \alpha_{i_2}& \dots&\alpha_{i_{k-1}}&0\\
    \alpha_{i_1}^2& \alpha_{i_2}^2& \dots&\alpha_{i_{k-1}}^2&0\\
    \vdots &\vdots &\dots&\vdots&\vdots\\
    \alpha_{i_1}^{k-r-1}& \alpha_{i_2}^{k-r-1}& \dots&\alpha_{i_{k-1}}^{k-r-1}&0\\
    \alpha_{i_1}^{k-r+1}& \alpha_{i_2}^{k-r+1}& \dots&\alpha_{i_{k-1}}^{k-r+1}&0\\
    \vdots &\vdots &\dots&\vdots&\vdots\\
    \alpha_{i_1}^{k-1}& \alpha_{i_2}^{k-1}& \dots&\alpha_{i_{k-1}}^{k-1}&1\\
    \alpha_{i_1}^{k}& \alpha_{i_2}^{k}& \dots&\alpha_{i_{k-1}}^{k}&\delta\\
\end{bmatrix}$$
for $\{i_1,i_2,\dots,i_{k-1}\}\subset \{1,2,\dots,n\}$. 

Let  $$B(y)=\begin{bmatrix}
    1&1& \dots &1&1\\
    \alpha_{i_1}& \alpha_{i_2}& \dots&\alpha_{i_{k-1}}&y\\
    \alpha_{i_1}^2& \alpha_{i_2}^2& \dots&\alpha_{i_{k-1}}^2&y^2\\
    \vdots &\vdots &\dots&\vdots&\vdots\\
    \alpha_{i_1}^{k-r-1}& \alpha_{i_2}^{k-r-1}& \dots&\alpha_{i_{k-1}}^{k-r-1}&y^{k-r-1}\\
    \alpha_{i_1}^{k-r+1}& \alpha_{i_2}^{k-r+1}& \dots&\alpha_{i_{k-1}}^{k-r+1}&y^{k-r+1}\\
    \vdots &\vdots &\dots&\vdots&\vdots\\
    \alpha_{i_1}^{k-1}& \alpha_{i_2}^{k-1}& \dots&\alpha_{i_{k-1}}^{k-1}&y^{k-1}\\
    \alpha_{i_1}^{k}& \alpha_{i_2}^{k}& \dots&\alpha_{i_{k-1}}^{k}&y^k\\
\end{bmatrix}.$$
Then $\det(B_3)=\delta A_{k}+A_{k-1}$, where $A_{k}$ and $A_{k-1}$ are the coefficients of $y^{k}$ and $y^{k-1}$, respectively in $\det(B(y))$. Observe that, by Proposition \ref{detk-r},
\begin{align*}
    \begin{split}
        \sum_{i=0}^kA_iy^i=& \prod_{1\leq s< t\leq k}(\alpha_{i_t}-\alpha_{i_s})\cdot \sum_{1\leq j_1<j_2<\dots<j_r\leq k}\alpha_{i_{j_1}}\alpha_{i_{j_2}}\dots \alpha_{i_{j_r}}\\ 
        =&\prod_{1\leq s< t\leq k-1}(\alpha_{i_t}-\alpha_{i_s}) \prod_{s=1}^{k-1}(y-\alpha_{i_s})\cdot \sum_{1\leq j_1<j_2<\dots<j_r\leq k}\alpha_{i_{j_1}}\alpha_{i_{j_2}}\dots \alpha_{i_{j_r}},
    \end{split}
\end{align*}
where $\alpha_{i_k}=y$.
Also,
\small \begin{align*}
    \begin{split}
        &\sum_{1\leq j_1<j_2<\dots<j_r\leq k}\alpha_{i_{j_1}}\alpha_{i_{j_2}}\dots \alpha_{i_{j_r}}\\
        &=\sum_{1\leq j_1<j_2<\dots<j_r\leq k-1}\alpha_{i_{j_1}}\alpha_{i_{j_2}}\dots \alpha_{i_{j_r}}+\sum_{1\leq j_1<j_2<\dots<j_{r-1}\leq k-1}\alpha_{i_{j_1}}\alpha_{i_{j_2}}\dots \alpha_{i_{j_{r-1}}\alpha_{i_k}}\\
        &=\sum_{1\leq j_1<j_2<\dots<j_r\leq k-1}\alpha_{i_{j_1}}\alpha_{i_{j_2}}\dots \alpha_{i_{j_r}}+\sum_{1\leq j_1<j_2<\dots<j_{r-1}\leq k-1}\alpha_{i_{j_1}}\alpha_{i_{j_2}}\dots \alpha_{i_{j_{r-1}}y}
    \end{split}
\end{align*}
Thus, the coefficient of $y^k$ is 
$$\prod_{1\leq s<t\leq k-1}(\alpha_{i_t}-\alpha_{i_s})\sum_{1\leq j_1<\dots <j_{r-1}\leq k-1,\ \ \ j_r=k}\alpha_{i_{j_1}}\alpha_{i_{j_2}}\dots\alpha_{i_{j_{r-1}}}$$
and coefficient of $y^{k-1}$ is 

    \begin{equation*}
    \begin{split}
        \prod_{1\leq s<t\leq k-1}(\alpha_{i_t}-\alpha_{i_s})\left [-\left (\sum_{s=1}^{k-1}\alpha_{i_s}\right)\left(\sum_{1\leq j_1<\dots <j_{r-1}\leq k-1}\alpha_{i_{j_1}}\alpha_{i_{j_2}}\dots\alpha_{i_{j_{r-1}}}\right) \right. \\ \left.+\sum_{1\leq j_1<\dots <j_{r}\leq k-1}\alpha_{i_{j_1}}\alpha_{i_{j_2}}\dots\alpha_{i_{j_{r}}}\right ].
         \end{split}
        \end{equation*} 
Hence, 
        \begin{equation*}
        \begin{split}
           \det (B_3)=&\left [\prod_{1\leq s<t\leq k-1}(\alpha_{i_t}-\alpha_{i_s})\right ]\cdot\\ &\left[ \delta \sum_{1\leq j_1<\dots <j_{r-1}\leq k-1}\alpha_{i_{j_1}}\alpha_{i_{j_2}}\dots\alpha_{i_{j_{r-1}}} - \left (\sum_{s=1}^{k-1}\alpha_{i_s}\right)\left(\sum_{1\leq j_1<\dots <j_{r-1}\leq k-1}\alpha_{i_{j_1}}\alpha_{i_{j_2}}\dots\alpha_{i_{j_{r-1}}}\right) \right.\\ 
      &\left. +\sum_{1\leq j_1<\dots <j_{r}\leq k-1}\alpha_{i_{j_1}}\alpha_{i_{j_2}}\dots\alpha_{i_{j_{r}}}  \right].
      \end{split}
     \end{equation*}

Thus, $\det(B_3)\neq0$ if and only if $(\#)$ holds. \\

4. Let $B_4$ be a $k\times k$ sub-matrix which  contains $u$, $v$ and $k-2$ columns of $G_{r,k}$. Assume that 
 $$B_4=\begin{bmatrix}
    1&1& \dots &1&0&0\\
    \alpha_{i_1}& \alpha_{i_2}& \dots&\alpha_{i_{k-2}}&0&0\\
    \alpha_{i_1}^2& \alpha_{i_2}^2& \dots&\alpha_{i_{k-2}}^2&0&0\\
    \vdots &\vdots &\dots&\vdots&\vdots&\vdots\\
    \alpha_{i_1}^{k-r-1}& \alpha_{i_2}^{k-r-1}& \dots&\alpha_{i_{k-2}}^{k-r-1}&0&0\\
    \alpha_{i_1}^{k-r+1}& \alpha_{i_2}^{k-r+1}& \dots&\alpha_{i_{k-2}}^{k-r+1}&0&0\\
    \vdots &\vdots &\dots&\vdots&\vdots\\
    \alpha_{i_1}^{k-1}& \alpha_{i_2}^{k-1}& \dots&\alpha_{i_{k-2}}^{k-1}&0&1\\
    \alpha_{i_1}^{k}& \alpha_{i_2}^{k}& \dots&\alpha_{i_{k-2}}^{k}&1&\delta\\
\end{bmatrix}$$
for $\{i_1,i_2,\dots,i_{k-2}\}\subset \{1,2,\dots,n\}$. 
Note that for $r>2$, we have
$$\det(B_4)=(-1)^{k-1+k}\det \begin{bmatrix}
    1&1& \dots &1\\
    \alpha_{i_1}& \alpha_{i_2}& \dots&\alpha_{i_{k-2}}\\
    \alpha_{i_1}^2& \alpha_{i_2}^2& \dots&\alpha_{i_{k-2}}^2\\
    \vdots &\vdots &\dots&\vdots\\
    \alpha_{i_1}^{k-r-1}& \alpha_{i_2}^{k-r-1}& \dots&\alpha_{i_{k-2}}^{k-r-1}\\
    \alpha_{i_1}^{k-r+1}& \alpha_{i_2}^{k-r+1}& \dots&\alpha_{i_{k-2}}^{k-r+1}\\
    \vdots &\vdots &\dots&\\
    \alpha_{i_1}^{k-2}& \alpha_{i_2}^{k-2}& \dots&\alpha_{i_{k-2}}^{k-2}   
\end{bmatrix}.$$
Thus, by Proposition \ref{detk-r}, $\det (B_4)$ is nonzero if and only if $(\ast\ast\ast)$ holds. If $r=2$, then the determinant of matrix $B_4$ is the determinant of a Vandermonde matrix and $\det(B_4)\neq0$. This completes the proof.
\end{proof}

For the case $r=1$, in \cite{Zhi2025}, the authors provided the necessary and sufficient conditions for $C_2$ to be MDS. In the following proposition, we demonstrate that the proof for the case $r=1$ can be done similar to the case $r\geq 2$ as in the above Theorem \ref{c2mds}.

\begin{proposition}\label{c2mdsreq1}
Let $C_2$ be the code generated by $G_2$ and $r=1$. Then    $C_2$ is MDS if and only if
\begin{equation*}
    \begin{split}
         \sum_{1\leq j_1\leq k}\alpha_{i_{j_1}}\alpha_{i_{j_2}}\dots\alpha_{i_{j_r}}\neq 0
        \end{split}
    \end{equation*}
and  $$\delta  - \left (\sum_{s=1}^{k-1}\alpha_{i_s}\right)^2
      +\sum_{1\leq j_1<j_{2}\leq k-1}\alpha_{i_{j_1}}\alpha_{i_{j_2}}\neq 0.$$
    for all pairwise distinct elements $\alpha_{i_j}$ with $\{i_1,i_2,\dots,i_{k-1}\}$, $\{i_1, i_2,\dots,i_{k}\}\subseteq \{1,2,\dots,n\}$.
 \end{proposition}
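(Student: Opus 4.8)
The plan is to follow the structure of the proof of Theorem~\ref{c2mds} essentially verbatim, adjusting only for the fact that when $r=1$ the omitted exponent is $k-1$, so that $G_{1,k}$ carries the rows $\alpha_i^{0},\alpha_i^{1},\dots,\alpha_i^{k-2},\alpha_i^{k}$ and the ``$1$''-entry of the extra column $v=(0,\dots,0,1,\delta)^{T}$ sits in the $\alpha_i^{k-2}$-row (this is the source of $j=2$). By Lemma~\ref{mdsnonsingular}, $C_2$ is MDS if and only if every $k\times k$ submatrix of $G_2$ is nonsingular, and I would split these into four families according to which of the columns $u=(0,\dots,0,1)^{T}$ and $v=(0,\dots,0,1,\delta)^{T}$ they contain. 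A submatrix $B_1$ using only columns of $G_{1,k}$ is a submatrix of $G_{1,k}$, so by Proposition~\ref{detk-r} (equivalently Theorem~\ref{iffcrkmds}) it is nonsingular precisely when $\sum_{s=1}^{k}\alpha_{i_s}\neq 0$ for the chosen $k$-subset. A submatrix $B_2$ containing $u$ and $k-1$ columns of $G_{1,k}$ is a submatrix of $G_1$, so by the $r=1$ case of Theorem~\ref{c1mds} its determinant is a Vandermonde determinant up to sign, hence always nonzero. A submatrix $B_4$ containing both $u$ and $v$ and $k-2$ columns of $G_{1,k}$ is handled by the column operation $v\mapsto v-\delta u$, which turns $v$ into the unit vector $e_{k-1}$; two Laplace expansions, along $e_{k}$ and then $e_{k-1}$, reduce $\det B_4$ up to sign to a $(k-2)\times(k-2)$ Vandermonde determinant on $\{\alpha_{i_1},\dots,\alpha_{i_{k-2}}\}$, again nonzero. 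Thus three of the four families contribute only the first stated condition, and everything hinges on the family $B_3$ containing $v$ together with $k-1$ columns of $G_{1,k}$.

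For $B_3$ I would reuse the interpolation device of Theorem~\ref{c2mds}: replace the column $v$ of $B_3$ by $(1,y,\dots,y^{k-2},y^{k})^{T}$ to obtain a matrix $B(y)$ whose cofactors along that column do not involve $y$, so that $\det B_3$ equals the coefficient of $y^{k-2}$ in $\det B(y)$ plus $\delta$ times the coefficient of $y^{k}$. Since $B(y)$ is exactly the matrix $M_{k}^{(1)}$ with $\alpha_{i_k}=y$, Proposition~\ref{detk-r} gives
$$\det B(y)=\Bigl(\prod_{1\le s<t\le k-1}(\alpha_{i_t}-\alpha_{i_s})\Bigr)\Bigl(\sum_{s=1}^{k-1}\alpha_{i_s}+y\Bigr)\prod_{s=1}^{k-1}(y-\alpha_{i_s}).$$
Writing $\sigma_j'=\sigma_j(\alpha_{i_1},\dots,\alpha_{i_{k-1}})$ and expanding $\bigl(\sigma_1'+y\bigr)\prod_{s=1}^{k-1}(y-\alpha_{i_s})=\bigl(\sigma_1'+y\bigr)\bigl(y^{k-1}-\sigma_1'y^{k-2}+\sigma_2'y^{k-3}-\cdots\bigr)$, the coefficient of $y^{k}$ is $1$ and the coefficient of $y^{k-2}$ is $\sigma_2'-(\sigma_1')^{2}$, so
$$\det B_3=\Bigl(\prod_{1\le s<t\le k-1}(\alpha_{i_t}-\alpha_{i_s})\Bigr)\bigl(\delta-(\sigma_1')^{2}+\sigma_2'\bigr),$$
which, since the $\alpha_i$ are distinct, is nonzero if and only if $\delta-\bigl(\sum_{s=1}^{k-1}\alpha_{i_s}\bigr)^{2}+\sum_{1\le j_1<j_2\le k-1}\alpha_{i_{j_1}}\alpha_{i_{j_2}}\neq 0$ for every $(k-1)$-subset $\{i_1,\dots,i_{k-1}\}$. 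Collecting the four families then yields the two conditions of the statement.

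The only step that requires genuine computation is the determinant of $B_3$; the remaining three families are formal. The subtlety to watch is the bookkeeping forced by the shift $j=2$ specific to $r=1$: it is this shift that makes the relevant coefficients those of $y^{k-2}$ and $y^{k}$ rather than of $y^{k-1}$ and $y^{k}$, so that the $B_3$-condition collapses to the compact quantity $\delta-S_2$ of the chosen $(k-1)$-subset instead of the longer expression $(\#)$ of Theorem~\ref{c2mds}, and it is equally why the analogues of $(\ast\ast)$ and $(\ast\ast\ast)$ degenerate into the automatic nonsingularity of Vandermonde minors in the $B_2$ and $B_4$ families.
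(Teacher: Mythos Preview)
Your proposal is correct and follows essentially the same argument as the paper: the same four-family split, the same interpolation device $B(y)$ for $B_3$, and the same application of Proposition~\ref{detk-r} to read off the $y^{k}$ and $y^{k-2}$ coefficients. The only cosmetic difference is that you handle $B_4$ via an explicit column operation $v\mapsto v-\delta u$ followed by two Laplace expansions, whereas the paper simply declares $B_2$ and $B_4$ nonsingular; both routes land on the same Vandermonde minor.
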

 \begin{proof}
Let $B_1,\ B_2,\ B_3$ and $B_4$ be the matrices as in the proof of the Theorem \ref{c2mds} corresponding to the possible cases. Then it is easy to see that $B_2$ and $B_4$ are non-singular, and $B_1$ is non-singular if and only if $$\sum_{1\leq j_1\leq k}\alpha_{i_{j_1}}\alpha_{i_{j_2}}\dots\alpha_{i_{j_r}}\neq 0.$$ 
 Moreover, the matrix $B_3$ will be of the form,
  $$B_3=\begin{bmatrix}
    1&1& \dots &1&0\\
    \alpha_{i_1}& \alpha_{i_2}& \dots&\alpha_{i_{k-1}}&0\\
    \alpha_{i_1}^2& \alpha_{i_2}^2& \dots&\alpha_{i_{k-1}}^2&0\\
    \vdots &\vdots &\dots&\vdots&\vdots\\
    \alpha_{i_1}^{k-2}& \alpha_{i_2}^{k-2}& \dots&\alpha_{i_{k-1}}^{k-2}&1\\
    \alpha_{i_1}^{k}& \alpha_{i_2}^{k}& \dots&\alpha_{i_{k-1}}^{k}&\delta\\
\end{bmatrix}$$
for some $\{i_1,i_2,\dots,i_{k-1}\}\subset \{1,2,\dots,n\}$. 

Let  $$B(y)=\begin{bmatrix}
    1&1& \dots &1&1\\
    \alpha_{i_1}& \alpha_{i_2}& \dots&\alpha_{i_{k-1}}&y\\
    \alpha_{i_1}^2& \alpha_{i_2}^2& \dots&\alpha_{i_{k-1}}^2&y^2\\
    \vdots &\vdots &\dots&\vdots&\vdots\\
    \alpha_{i_1}^{k-2}& \alpha_{i_2}^{k-2}& \dots&\alpha_{i_{k-1}}^{k-2}&y^{k-2}\\
    \alpha_{i_1}^{k}& \alpha_{i_2}^{k}& \dots&\alpha_{i_{k-1}}^{k}&y^k\\
\end{bmatrix}.$$
Then $\det(B_3)=\delta A_{k}+A_{k-2}$, where $A_{k}$ and $A_{k-2}$ are the coefficients of $y^{k}$ and $y^{k-2}$, respectively in $\det(B(y))$. Observe that, by Proposition \ref{detk-r},
\begin{align*}
    \begin{split}
        \sum_{i=0}^kA_iy^i=& \prod_{1\leq s< t\leq k}(\alpha_{i_t}-\alpha_{i_s})\cdot (\alpha_{i_1}+\alpha_{i_2}+\dots+\alpha_{i_{k-1}}+y)\\ 
        =&\prod_{1\leq s< t\leq k-1}(\alpha_{i_t}-\alpha_{i_s}) \prod_{s=1}^{k-1}(y-\alpha_{i_s})\cdot (\alpha_{i_1}+\alpha_{i_2}+\dots+\alpha_{i_{k-1}}+y),
    \end{split}
\end{align*}
where $\alpha_{i_k}=y$.

Thus, the coefficient of $y^k$ is 
$$\prod_{1\leq s<t\leq k-1}(\alpha_{i_t}-\alpha_{i_s})$$
and the coefficient of $y^{k-2}$ is 
\small \begin{align*}
   \begin{split}
       \prod_{1\leq s<t\leq k-1}(\alpha_{i_t}-\alpha_{i_s})\left [-\left (\sum_{s=1}^{k-1}\alpha_{i_s}\right)(\alpha_{i_1}+\alpha_{i_2}+\dots+\alpha_{i_{k-1}}) +\sum_{1\leq j_1< j_{2}\leq k-1}\alpha_{i_{j_1}}\alpha_{i_{j_2}}\right ].
   \end{split} 
\end{align*}

Hence, \small \begin{align*}
    \begin{split}
      \det (B_3)=  \left [\prod_{1\leq s<t\leq k-1}(\alpha_{i_t}-\alpha_{i_s})\right ]
      \left (\delta  - \left (\sum_{s=1}^{k-1}\alpha_{i_s}\right)^2
      +\sum_{1\leq j_1<j_{2}\leq k-1}\alpha_{i_{j_1}}\alpha_{i_{j_2}}\right). 
    \end{split}
\end{align*}
Thus, $\det(B_3)\neq 0$ if and only 
$$\delta  - \left (\sum_{s=1}^{k-1}\alpha_{i_s}\right)^2
      +\sum_{1\leq j_1<j_{2}\leq k-1}\alpha_{i_{j_1}}\alpha_{i_{j_2}}\neq 0.$$   
      This completes the proof.     
\end{proof}

\begin{theorem}
Let $6\leq 2k\leq n$, $1\leq r\leq k-1$ and  $C_2$ be an MDS code generated by $G_2$. Then  $C_2$ is a non-GRS MDS code.
\end{theorem}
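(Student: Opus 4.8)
The plan is to reduce the non-GRS claim for $C_2$ to the already-established non-GRS property of the smaller codes via the extension lemma. By Theorem \ref{c2extend}, $C_2$ is an extended code of $C_1$; that is, $C_2 = \overline{C_1}(w)$ for some nonzero $w \in \mathbb{F}_q^{n+1}$. By Lemma \ref{extendnonrs}, if $C_2$ were monomially equivalent to a GRS code, then $C_1$ would also be monomially equivalent to a GRS code. But $C_1$ is in turn an extended code of $C_{r,k}$ by Theorem \ref{c1extend}, so by Lemma \ref{extendnonrs} again, $C_1$ being monomially equivalent to a GRS code would force $C_{r,k}$ to be monomially equivalent to a GRS code as well.

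The next step is to invoke the fact that $C_{r,k}$ is genuinely non-GRS under the hypotheses $6 \le 2k \le n$ and $1 \le r \le k-1$. This is exactly the content of Theorem \ref{iffcrkmds} (and Proposition \ref{sufcrkmds}): whenever $C_{r,k}$ is MDS in this parameter range, it is non-GRS. One subtlety to handle cleanly is that to apply Theorem \ref{iffcrkmds} we need $C_{r,k}$ to be MDS in the first place. Since $C_2$ is assumed MDS and $C_2 = \overline{C_1}(w)$, puncturing the last coordinate shows $C_1$ is MDS (an extended code being MDS implies the original is MDS, as puncturing an MDS code at any coordinate yields an MDS code); the same argument applied once more shows $C_{r,k}$ is MDS. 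Thus $C_{r,k}$ is an $[n,k]$ MDS code with $6 \le 2k \le n$, hence non-GRS by Theorem \ref{iffcrkmds}.

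Putting the chain together: $C_{r,k}$ is not monomially equivalent to any GRS code, so by Lemma \ref{extendnonrs} neither is $C_1$, and by Lemma \ref{extendnonrs} once more neither is $C_2$. Since $C_2$ is MDS by assumption, it is a non-GRS MDS code, which is the claim. In short, the proof follows from Lemma \ref{extendnonrs} applied twice together with Theorems \ref{c2extend}, \ref{c1extend}, and \ref{iffcrkmds}.

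There is essentially no hard analytic step here; the only thing requiring care is making sure the MDS hypothesis propagates down the chain $C_2 \to C_1 \to C_{r,k}$ so that Theorem \ref{iffcrkmds} genuinely applies, and confirming that "non-GRS" in Definition \ref{nongrs} (not equivalent to a GRS code) is compatible with the "monomially equivalent to a GRS code" phrasing in Lemma \ref{extendnonrs} — which it is, since monomial equivalence is the relevant notion of code equivalence throughout. The expected one-line obstacle, then, is purely bookkeeping: verifying that the parameter constraint $6 \le 2k \le n$ carries over verbatim (it does, since the underlying $n$ and $k$ are unchanged by the extensions) so that the hypotheses of Theorem \ref{iffcrkmds} are met.
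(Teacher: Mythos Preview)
Your proof is correct and follows essentially the same approach as the paper: the paper's proof is the one-liner ``follows from Lemma \ref{extendnonrs} and Theorem \ref{c2extend},'' implicitly relying on the already-proved non-GRS property of $C_1$, while you unwind this one step further to $C_{r,k}$ and add the (useful) observation that MDS-ness of $C_2$ propagates down to $C_{r,k}$ by puncturing, which the paper leaves unsaid.
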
 
\begin{proof}
    The desired conclusion follows from Lemma \ref{extendnonrs} and Theorem \ref{c2extend}.
\end{proof}

\begin{example}
    Let $q=17$, $n=6,$ $k=3$ and $r=2$. Take $\mathbf{\alpha}=(\alpha_1,\alpha_2,\alpha_3,\alpha_4,\alpha_5,\alpha_6)=(0,1,5,3,8,6)$. Then $\mathbf{\alpha}$ satisfies $(\ast)$ and $(\ast\ast)$ in Theorem \ref{c1mds}. 
    By Condition $(\#)$, for $\delta\in \{0,2,9,12,14\}$, the
    code generated by $$G=\begin{bmatrix}
        1&1&1&1&1&1&0&0\\
\alpha_1^2&\alpha_2^2&\alpha_3^2&\alpha_4^2&\alpha_5^2&\alpha_6^2&0&1\\
         \alpha_1^3&\alpha_2^3&\alpha_3^3& \alpha_4^3&\alpha_5^3&\alpha_6^3&1&\delta\\
    \end{bmatrix}=\begin{bmatrix}
        1&1&1&1&1&1&0&0\\
        0&1&8&9&13&2&0&1\\
        0&1&6&10&2&12&1&\delta
    \end{bmatrix}$$ is a $[8,3,6]_{17}$ non-GRS MDS code. Verified by MAGMA.
\end{example}

Following \cite{Wu2025}, by Theorem \ref{coveradius}, we have the following corollary. 
\begin{corollary}
    Let $C_2$ be an $[n+2,k]$ MDS code generated by the matrix $G_2$. If $C_1$ is an MDS code generated by the matrix $G_1$, then the covering radius of $C_{1}^{\perp}$ is $k$ with deep whole vector $w$ obtained in Theorem \ref{c2extend}.
\end{corollary}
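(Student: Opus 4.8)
The plan is to apply Theorem~\ref{coveradius} directly to the code $C=C_1$. By hypothesis $C_1$ is an $[n+1,k]$ MDS code, and by Theorem~\ref{c2extend} the code $C_2$ equals the extended code $\overline{C_1}(w)$ for the explicit vector $w=(w_1,\dots,w_n,w_{n+1})\in\mathbb{F}_q^{n+1}$ constructed there (with $w_i=v_i$ for $1\le i\le n$ and $w_{n+1}=\delta-\sum_{i=1}^n\alpha_i^k v_i$, where $v$ is the vector with $G_{r-1,k-1}v^T=(0,\dots,0,1)^T$).

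First I would record the length/dimension bookkeeping so that the hypotheses of Theorem~\ref{coveradius} match: since $C_1$ has length $n+1$, an extension of $C_1$ has length $n+2$, which is exactly the length of $C_2$; and a deep hole of $C_1^{\perp}$ is by definition a vector in $\mathbb{F}_q^{n+1}$, which is precisely the ambient space in which the vector $w$ of Theorem~\ref{c2extend} lives. Next, since $C_2=\overline{C_1}(w)$ is MDS by hypothesis, the ``only if'' direction of Theorem~\ref{coveradius}, applied with $C=C_1$ and this particular $w$, yields simultaneously that the covering radius of $C_1^{\perp}$ equals $k$ and that $w$ is a deep hole of $C_1^{\perp}$. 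This is exactly the assertion of the corollary.

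There is no genuine obstacle here: the statement is an immediate consequence of combining Theorem~\ref{coveradius} with the structural description of $C_2$ as an extension of $C_1$ given in Theorem~\ref{c2extend}. The only point requiring a little care is to invoke the correct extension step in the chain $C_{r,k}\to C_1\to C_2$, namely that $C_2$ extends $C_1$ (Theorem~\ref{c2extend}) and not that $C_1$ extends $C_{r,k}$ (Theorem~\ref{c1extend}), so that the conclusion is phrased in terms of $C_1^{\perp}$ rather than $C_{r,k}^{\perp}$.
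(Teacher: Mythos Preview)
Your proposal is correct and matches the paper's approach exactly: the paper does not give an explicit proof of this corollary, merely indicating that it follows from Theorem~\ref{coveradius}, and your argument spells out precisely that deduction by combining Theorem~\ref{c2extend} (so that $C_2=\overline{C_1}(w)$) with the ``only if'' direction of Theorem~\ref{coveradius} applied to $C=C_1$.
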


\section{MDS codes and $o$-monomials}\label{oplynomials}
In this section, we discuss a connection between MDS codes of length $q+2$ over $\mathbb{F}_q$  and hyperovals in $\text{PG}(2,q)$, where $q=2^m$ for some positive integer $m$. We provide a new characterization of $o$-monomials in terms of the complete symmetric functions. Throughout this section, we assume $q=2^m$ for some positive integer $m$. For more details on $o$-polynomials and hyperovals, the reader is referred to \cite{Abdukhalikov2017,Abdukhalikov2019,Ball2019,Hirschfeld1998,Kiss2023}.
\begin{definition}
    In a plane of order $q$ a $(q +1)$-arc is called an oval, and a $(q+2)$-arc is called a hyperoval.
\end{definition}
\begin{definition}
    A polynomial $f(x)$ in $\mathbb{F}_q[x]$ is called an $o$-polynomial if 
    $$D(f)=\{(1:x:f(x))\mid  x\in \mathbb{F}_q\}\cup \{(0:1:0),(0:0:1)\}$$
    is a hyperoval.
\end{definition}
Suppose  $\mathbb{F}_q=\{\alpha_1,\alpha_2,\alpha_3,\dots,\alpha_q\}$.
The following theorem is well known \cite{Hirschfeld1998}.
\begin{theorem}
Let $C_o$ be a code of length $q+2$ over $\mathbb{F}_q$ with a generator matrix 
\begin{equation}\label{opolymatrix}
   G_o= \begin{bmatrix}
    1&1&1&\dots &1&0&0\\
    \alpha_1&\alpha_2&\alpha_3&\dots&\alpha_q&1&0\\
f(\alpha_1)&f(\alpha_2)&f(\alpha_3)&\dots&f(\alpha_q)&0&1
\end{bmatrix}
\end{equation}
for some $f(x)\in \mathbb{F}_q[x]$. Then $C$ is an MDS code if and only if $f(x)$ is an $o$-polynomial.
\end{theorem}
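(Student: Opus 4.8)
The plan is to prove both directions by translating the MDS condition into a statement about $k\times k$ minors of $G_o$ (here $k=3$) via Lemma \ref{mdsnonsingular}, and then recognising each nonvanishing minor as the hyperoval/arc condition on the point set $D(f)$. First I would fix notation: write the columns of $G_o$ as points of $\mathrm{PG}(2,q)$, namely $P_i=(1:\alpha_i:f(\alpha_i))$ for $1\le i\le q$, together with $Q_\infty=(0:1:0)$ and $R_\infty=(0:0:1)$; these are exactly the $q+2$ points of $D(f)$. By Lemma \ref{mdsnonsingular}, $C_o$ is MDS if and only if every $3\times 3$ submatrix of $G_o$ is nonsingular, i.e.\ if and only if no three of these $q+2$ points are collinear, which is precisely the definition of $D(f)$ being a $(q+2)$-arc, hence a hyperoval.

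Next I would carry out the case analysis on which three columns are chosen, showing in each case that the $3\times 3$ determinant is nonzero exactly under the relevant ``three points not collinear'' condition. The cases are: (a) three columns among the first $q$, giving the determinant $\det\begin{bmatrix}1&1&1\\\alpha_i&\alpha_j&\alpha_k\\f(\alpha_i)&f(\alpha_j)&f(\alpha_k)\end{bmatrix}$, which is nonzero for all distinct $i,j,k$ iff no three points $(1:x:f(x))$ are collinear; (b) two columns among the first $q$ plus $Q_\infty$, whose determinant reduces (expanding along the $Q_\infty$ column) to $\pm(f(\alpha_i)-f(\alpha_j))$, nonzero for all $i\ne j$ iff $f$ is a bijection (equivalently $(0:1:0)$ lies on no secant of the oval part); (c) two columns among the first $q$ plus $R_\infty$, with determinant $\pm(\alpha_i-\alpha_j)\ne0$ automatically since the $\alpha$'s are distinct; (d) one column among the first $q$ plus both $Q_\infty$ and $R_\infty$, with determinant $\pm1\ne 0$ automatically. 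Collecting these, ``all $3\times 3$ minors nonsingular'' is equivalent to the conjunction of (a) and (b), and this conjunction is exactly the statement that $D(f)$ is a hyperoval: a $(q+2)$-arc in $\mathrm{PG}(2,q)$. I would cite \cite{Hirschfeld1998} for the standard fact that a $(q+2)$-set with no three collinear is precisely a hyperoval, and for the equivalence between arcs in $\mathrm{PG}(k-1,q)$ and $k$-dimensional MDS codes already recalled in the introduction.

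To finish, I would note that the forward direction (MDS $\Rightarrow$ $f$ an $o$-polynomial) reads off cases (a) and (b) as necessary, hence $D(f)$ is an arc of size $q+2$, i.e.\ a hyperoval, so $f$ is an $o$-polynomial by definition; and the reverse direction (($f$ an $o$-polynomial) $\Rightarrow$ MDS) uses that $D(f)$ being a hyperoval forces no three of its points collinear, which gives nonvanishing of every minor in cases (a)--(d), hence MDS by Lemma \ref{mdsnonsingular}.

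The main obstacle is essentially bookkeeping rather than depth: one must make sure the case split on triples of columns is exhaustive and that the ``points collinear'' reformulation in cases (a) and (b) is matched exactly to the hyperoval definition (in particular, that collinearity of $Q_\infty=(0:1:0)$ with $P_i,P_j$ is equivalent to $f(\alpha_i)=f(\alpha_j)$, and that collinearity of $R_\infty$ with $P_i,P_j$ never occurs since $\alpha_i\ne\alpha_j$). A secondary subtlety is to state cleanly that a set of $q+2$ points in $\mathrm{PG}(2,q)$ with no three collinear is automatically a hyperoval (no larger arc exists since $q$ is even), so that the ``arc of size $q+2$'' obtained is exactly the object in the definition of $o$-polynomial; this is classical and can simply be quoted.
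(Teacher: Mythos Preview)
The paper does not give a proof of this theorem; it simply records it as well known with a citation to \cite{Hirschfeld1998}. Your argument is correct and is exactly the standard proof: identify the columns of $G_o$ with the points of $D(f)$ in $\mathrm{PG}(2,q)$, invoke Lemma~\ref{mdsnonsingular}, and check by the four-case split on triples of columns that nonsingularity of all $3\times 3$ minors is equivalent to no three points of $D(f)$ being collinear. Nothing is missing.
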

The columns of the matrix $G_o$ represent the points in projective space $\text{PG}(2,q)$. Thus, the set of columns of $G_o$ represents a hyperoval in $\text{PG}(2,q)$ if and only if $C_o$ is MDS.    
For example, let $f(x)=x^2$. Then $C_o$ is a double extended RS code. Therefore, $C_o$ is MDS. Hence $f(x)=x^2$ is an $o$-polynomial. 

Only a limited number of $o$-polynomial families have been identified so far(see \cite[Table 1]{Ball2019}). A complete classification of $o$-polynomials remains an open problem, and the discovery of new $o$-polynomial classes continues to be a difficult and open area of research. It is known \cite{Caullery2015} that if $f(x)$ is an $o$-polinomial of degree less than $\frac{1}{2}q^{1/4}$ then $f(x)$ is equivalent to either $x^6$ or $x^h$ for a positive integer $h$. In the next theorem, we provide a characterization of $o$-monomials in terms of the complete symmetric functions.

\begin{theorem}
    Let $f(x)=x^h$ for some positive integer $h$ and  $\gcd(h,q-1)=1$. Then $f(x)$ is an $o$-polynomial if and only if $S_{h-2}(\alpha_{i_1},\alpha_{i_2},\alpha_{i_3})\neq 0$ for all pairwise distinct $\alpha_{i_j}$ with $\{\alpha_{i_1},\alpha_{i_2},\alpha_{i_3}\}\subseteq\mathbb{F}_q$.
\end{theorem}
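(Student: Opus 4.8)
The plan is to reduce the $o$-polynomial condition for the monomial $f(x)=x^h$ to a statement about $3\times 3$ minors of a generator matrix, and then to identify those minors with complete symmetric functions via Proposition \ref{detk-r} and Lemma \ref{detgh}. By the theorem preceding the statement, $f(x)=x^h$ is an $o$-polynomial if and only if the code $C_o$ with generator matrix $G_o$ (as in \eqref{opolymatrix}, with $f=x^h$) is MDS, which by Lemma \ref{mdsnonsingular} holds if and only if every $3\times 3$ submatrix of $G_o$ is nonsingular. The hypothesis $\gcd(h,q-1)=1$ guarantees that $x\mapsto x^h$ is a permutation of $\mathbb{F}_q$, so in particular $f(0)=0$ and $f$ is injective; this is what makes the "degenerate" columns $(0:1:0)$ and $(0:0:1)$ behave well. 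So first I would enumerate the types of $3\times 3$ submatrices of $G_o$: (i) three columns among the first $q$ (i.e. of the form $(1,\alpha_{i},\alpha_i^h)^T$); (ii) two such columns together with $(0,1,0)^T$; (iii) two such columns together with $(0,0,1)^T$; (iv) one such column together with both $(0,1,0)^T$ and $(0,0,1)^T$.

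Cases (ii), (iii), (iv) are easy and should be dispatched quickly: in case (iv) the determinant is (up to sign) $1$; in cases (ii) and (iii) one expands along the special column and is left with a $2\times 2$ minor of a Vandermonde-type matrix in two distinct arguments $\alpha_{i_1},\alpha_{i_2}$, whose value is a nonzero multiple of $(\alpha_{i_2}-\alpha_{i_1})$ or of $(\alpha_{i_2}-\alpha_{i_1})(\alpha_{i_1}^{h-1}+\dots)$ — in any case nonzero because the $\alpha$'s are distinct, $h\ge 1$, and (for the latter) because $x^h$ is a permutation so $\alpha_{i_1}^h\ne\alpha_{i_2}^h$ when $\alpha_{i_1}\ne\alpha_{i_2}$; one has to be slightly careful when some $\alpha_{i_j}=0$, but injectivity of $x\mapsto x^h$ covers that. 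So these cases impose no constraint. The real content is case (i): the $3\times 3$ determinant
\[
\det\begin{bmatrix} 1 & 1 & 1 \\ \alpha_{i_1} & \alpha_{i_2} & \alpha_{i_3} \\ \alpha_{i_1}^h & \alpha_{i_2}^h & \alpha_{i_3}^h\end{bmatrix}
\]
must be nonzero for all triples of distinct $\alpha_{i_j}$. This is exactly the matrix $G_h$ of Lemma \ref{detgh} with $n=3$ (a Vandermonde matrix with its last row replaced by the $h$-th powers), and that lemma gives, for $h\ge 2=n-1$,
\[
\det G_h = S_{h-2}(\alpha_{i_1},\alpha_{i_2},\alpha_{i_3})\cdot\det(M_\Lambda)=S_{h-2}(\alpha_{i_1},\alpha_{i_2},\alpha_{i_3})\prod_{1\le s<t\le 3}(\alpha_{i_t}-\alpha_{i_s}).
\]
Since the $\alpha_{i_j}$ are distinct, the Vandermonde factor is nonzero, so this determinant is nonzero precisely when $S_{h-2}(\alpha_{i_1},\alpha_{i_2},\alpha_{i_3})\ne 0$. (The boundary cases $h=1$ and $h=2$ are consistent: $h=2$ gives $S_0=1\ne 0$, recovering the conic/$x^2$ example, and $h=1$ corresponds to a degenerate situation excluded by $\gcd(h,q-1)=1$ unless $q-1=1$.) Combining the four cases, $C_o$ is MDS if and only if $S_{h-2}(\alpha_{i_1},\alpha_{i_2},\alpha_{i_3})\ne 0$ for all pairwise distinct $\alpha_{i_1},\alpha_{i_2},\alpha_{i_3}\in\mathbb{F}_q$, which is the assertion.

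The main obstacle, such as it is, is bookkeeping rather than depth: one must be careful that Lemma \ref{detgh} is stated for a matrix whose arguments are the fixed $\alpha$'s, and here we are applying it to an arbitrary $3$-subset of $\mathbb{F}_q$, so one should phrase the symmetric functions $S_{h-2}$ as functions of three indeterminates evaluated at the chosen triple; and one must confirm that the auxiliary cases (ii)–(iv) genuinely never vanish, which is where the permutation hypothesis $\gcd(h,q-1)=1$ (equivalently $\gcd(h,q-1)=1$ ensuring $x\mapsto x^h$ bijective, hence $0\mapsto 0$ and distinct inputs give distinct outputs) is used. No characteristic-$2$ peculiarity is needed beyond the fact that $q=2^m$ is what makes $q+2$-arcs (hyperovals) possible in the first place, so that the theorem relating $C_o$ to $o$-polynomials applies.
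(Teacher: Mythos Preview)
Your proposal is correct and follows essentially the same approach as the paper: reduce the $o$-polynomial condition to nonsingularity of all $3\times 3$ minors of $G_o$ via Lemma~\ref{mdsnonsingular}, dispose of the three ``mixed'' cases using the permutation hypothesis $\gcd(h,q-1)=1$, and handle the generic case by applying Lemma~\ref{detgh} with $n=3$ to factor the determinant as $S_{h-2}(\alpha_{i_1},\alpha_{i_2},\alpha_{i_3})$ times the Vandermonde. The paper's proof is organized the same way, with the same case split and the same appeal to Lemma~\ref{detgh}.
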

\begin{proof}
 Let $f(x)=x^h$ be an $o$-polynomial. Then by Theorem \ref{opolymatrix}, the code $C_o$ generated by $G_o$ is an MDS code. Consequently, all $3\times 3$ submatrices of $G_o$ are non-singular. Let $B$ be a $3\times 3$ submatrix of $G_o$ not containing the last two columns of $G_o$, that is,
 $$B=\begin{bmatrix}
1&1&1\\
\alpha_{i_1}& \alpha_{i_2}&\alpha_{i_3}\\
\alpha_{i_1}^h&\alpha_{i_2}^h&\alpha_{i_3}^h
 \end{bmatrix}$$
 for some $\{\alpha_{i_1},\alpha_{i_2},\alpha_{i_3}\}\subseteq\mathbb{F}_q$. By Lemma \ref{detgh}, we have $$\det(B)=S_{h-2}(\alpha_{i_1},\alpha_{i_2},\alpha_{i_3})\cdot\prod_{1\leq j_1<j_2\leq 3}(\alpha_{i_{j_2}}-\alpha_{i_{j_1}}).$$
 Since $\det(B)\neq 0$ and $\{\alpha_{i_1},\alpha_{i_2},\alpha_{i_3}\}\subseteq\mathbb{F}_q$ is arbitrary, therefore $S_{h-2}(\alpha_{i_1},\alpha_{i_2},\alpha_{i_3})\neq 0$ for all pairwise distinct $\{\alpha_{i_1},\alpha_{i_2},\alpha_{i_3}\}\subseteq\mathbb{F}_q$.

 Conversely, let $S_{h-2}(\alpha_{i_1},\alpha_{i_2},\alpha_{i_3})\neq 0$ for all pairwise distinct $\{\alpha_{i_1},\alpha_{i_2},\alpha_{i_3}\}\subseteq\mathbb{F}_q$. We show that the code $C_o$ generated by $G_o$ is MDS. Let $B$ be an arbitrary $3\times 3$ submatrix of $G_o$. We have the following possibilities:\\
 1. Let  $$B=\begin{bmatrix}
1&0&0\\
\alpha_{i_1}& 1&0\\
\alpha_{i_1}^h&0&1
 \end{bmatrix}$$ for some $\alpha_{i_1}\in \mathbb{F}_q$. It is easy to see that $\det(B)\neq 0$.\\
 2. Let  $$B=\begin{bmatrix}
1&1&0\\
\alpha_{i_1}& \alpha_{i_2}&1\\
\alpha_{i_1}^h&\alpha_{i_2}^h&0
 \end{bmatrix}$$ for some $\{\alpha_{i_1},\alpha_{i_2}\}\subseteq\mathbb{F}_q$. Then $\det(B)=\alpha_{i_1}^h-\alpha_{i_2}^h\neq0$.\\
 3. Let $$B=\begin{bmatrix}
1&1&0\\
\alpha_{i_1}& \alpha_{i_2}&0\\
\alpha_{i_1}^h&\alpha_{i_2}^h&1
 \end{bmatrix}$$ for some $\{\alpha_{i_1},\alpha_{i_2}\}\subseteq\mathbb{F}_q$. Then $\det(B)=\alpha_{i_2}-\alpha_{i_1}\neq0$.\\
 4.  $$B=\begin{bmatrix}
1&1&1\\
\alpha_{i_1}& \alpha_{i_2}&\alpha_{i_3}\\
\alpha_{i_1}^h&\alpha_{i_2}^h&\alpha_{i_3}^h
 \end{bmatrix}$$ for some $\{\alpha_{i_1},\alpha_{i_2},\alpha_{i_3}\}\subseteq\mathbb{F}_q$. Then, by Lemma \ref{detgh} $$\det(B)=S_{h-2}(\alpha_{i_1},\alpha_{i_2},\alpha_{i_3})\cdot\prod_{1\leq j_1<j_2\leq 3}(\alpha_{i_{j_2}}-\alpha_{i_{j_1}})\neq 0.$$\\
 Thus, by Lemma \ref{mdsnonsingular}, the code $C_o$ is MDS. Hence $f(x)=x^h$ is an $o$-polynomial.
\end{proof}

\section*{Conclusion}
Maximum distance separable (MDS) codes are highly valued for their practical applications, with most known constructions being generalized Reed-Solomon (GRS) codes. MDS codes that are not equivalent to GRS codes are called non-GRS MDS codes. Constructing non-GRS MDS codes remains a challenging and intriguing problem in coding theory, often limited by stringent parameter constraints. In this work, we constructed two classes of non-GRS MDS codes. Furthermore, we determined the parity check matrices for these codes. Also, we used the connection of MDS codes with arcs in finite projective spaces and provided a new characterization of $o$-monomials in terms of the complete symmetric functions.

\section*{Acknowledgment}

The authors were supported by the UAEU-AUA grant G00004614.

\bibliographystyle{abbrv}
	\bibliography{ref}

\end{document}